\newcommand{\FULLPAPER}
    \author{
        Francesco Ald\`a \\
        Horst G\"ortz Institute for IT Security \\
        and Faculty of Mathematics\\
        Ruhr-Universit\"at Bochum\\
        Germany\\
        \texttt{francesco.alda@rub.de} \\
        \And
        Benjamin I.P. Rubinstein\\
        Dept. Computing and Information Systems\\
        The University of Melbourne\\
        Australia\\
        \texttt{brubinstein@unimelb.edu.au} \\
    }
\newcommand\filltoend{\leavevmode{\unskip
  \leaders\hrule height.5ex depth\dimexpr-.5ex+0.4pt\hfill\hbox{}%
  \parfillskip=0pt\endgraf}}
\newcommand{\iffullelse}[2]{\ifdefined\FULLPAPER{#1}\else{#2}\fi}
\newcommand{\cf}{\emph{cf.}\xspace}
\newcommand{\ie}{\emph{i.e.,}\xspace}
\newcommand{\eg}{\emph{e.g.,}\xspace}
\newcommand{\str}{\ensuremath{\star}}
\renewcommand{\l}{\ell}
\newcommand{\cD}{\ensuremath{\mathcal{D}}\xspace}
\newcommand{\cX}{\ensuremath{\mathcal{X}}\xspace}
\newcommand{\cY}{\ensuremath{\mathcal{Y}}\xspace}
\newcommand{\cM}{\mathcal{M}}
\newcommand{\cA}{\mathcal{A}}
\newcommand{\cR}{\mathcal{R}}
\newcommand{\reals}{\mathbb{R}}
\newcommand{\eps}{\varepsilon}
\newcommand{\Lap}{\mathrm{Lap}}
\newcommand{\nats}{\mathbb{N}}
\newcommand{\ra}{\rightarrow}
\newcommand{\cH}{\mathcal{H}}
\newcommand{\vect}[1]{\boldsymbol{#1}}
\DeclarePairedDelimiter{\floor}{\lfloor}{\rfloor}
\DeclareMathOperator*{\argmin}{arg\,min}
\newcommand{\citenoun}[1]{\citeauthor{#1}~\shortcite{#1}\xspace} 
\newcommand{\citenaked}[1]{\citeauthor{#1}~\citeyear{#1}\xspace} 
\renewcommand{\Pr}{\mathbb{P}}
\newtheorem{theorem}{Theorem}
\newtheorem{lemma}[theorem]{Lemma}
\newtheorem{proposition}[theorem]{Proposition}
\newtheorem{definition}{Definition}
\title{The Bernstein Mechanism:\\ Function Release under Differential Privacy
\ifdefined\SUBTITLE
  \\ \SUBTITLE
\fi
}
\begin{document}

\nocopyright

\maketitle

\begin{abstract}
We address the problem of general function release under differential privacy,
by developing a functional mechanism that applies
under the weak assumptions of oracle access to target function evaluation and sensitivity. These conditions permit treatment of functions described explicitly or implicitly as algorithmic black boxes.
We achieve this result by leveraging the iterated Bernstein operator for polynomial
approximation of the target function, and polynomial coefficient perturbation. Under weak regularity conditions, we establish fast rates on utility measured by high-probability uniform
approximation. We provide a lower bound on the utility achievable for any functional
mechanism that is $\eps$-differentially private. The generality of our mechanism is
demonstrated by the analysis of a number of example learners, including naive Bayes, non-parametric
estimators and regularized empirical risk minimization.
Competitive rates are demonstrated for kernel density estimation; and
$\eps$-differential privacy is achieved for a broader class of support vector machines
than known previously.
\end{abstract}

\section{Introduction}\label{sec:introduction}

In recent years, {\em differential privacy}~\cite{DMNS2006} has emerged as a leading paradigm for privacy-preserving statistical analyses. It provides formal guarantees that aggregate statistics output by a randomized mechanism are not significantly influenced by the presence or absence of an individual datum.
\emph{Where the Laplace mechanism~\cite{DMNS2006} is a de facto approach for converting vector-valued functions to differential privacy, in this paper we seek an equivalent approach for privatizing function-valued mappings.} We achieve our goal through the development of a novel Bernstein functional mechanism.
Unlike existing mechanisms, ours applies to releasing explicitly and implicitly defined functions, and is characterized by a full theoretical analysis.

Our setting is the release of functions that depend on privacy-sensitive training data, and that can be subsequently evaluated on arbitrary test points. This non-interactive setting matches a wide variety of learning tasks from naive Bayes classification, non-parametric methods (kernel density estimation and regression) where the function of train and test data is explicit, to generalized linear models, support vector machines where the function is only implicitly defined by an iterative algorithm. Our generic mechanism is based on functional approximation by Bernstein basis polynomials, specifically via an iterated Bernstein operator. Privacy is guaranteed by sanitizing the coefficients of approximation, which requires only function evaluation. It is the very limited oracle access required by our mechanism---to non-private function evaluation and sensitivity---that grants it broad applicability akin to the Laplace mechanism.

The Bernstein polynomials central to our mechanism are used in the Stone-Weierstrass theorem to uniformly approximate any continuous function on a closed interval.
Moreover, the Bernstein operator offers several advantages such as data-independent bounds, no requirement of access to target function derivatives, and yields approximations that are pointwise convex combinations of the function evaluations on a cover. As a result, applying privacy-preserving perturbations to the approximation's coefficients permits us to control utility and achieve fast convergence rates.

In addition to being analyzed in full, the Bernstein mechanism is easy to use. We demonstrate this with a variety of example analyses of the mechanism applied to learners.
Finally, we provide a lower bound that fundamentally limits utility under private function release, partly resolving a question posed by \citenoun{HRW2013}. This matches (up to logarithmic factors) our upper bound in the linear case.

\paragraph*{Related Work.} Polynomial approximation has proven useful in differential privacy outside function release~\cite{thaler2012faster,chandrasekaran2014faster}. Few previous attempts have been made towards private function release. 
\citenoun{HRW2013} add Gaussian process noise which only yields a weaker form of privacy, namely $(\eps,\delta)$-differential privacy, and does not admit general utility rates. \citenoun{zhang2012functional} introduce a functional mechanism for the more specific task of perturbing the objective in private optimization, but they assume separability in the training data and do not obtain rates on utility.

\citenoun{WFZW2013} propose a mechanism that releases a summary of data in a trigonometric basis, able to respond to queries that are smooth as in our setting, but are also required to be separable in the training dataset as assumed by~\citenoun{zhang2012functional}. A natural application is kernel density estimation, which would achieve a rate of $O\left(\log(1/\beta)/(n\eps)\right)^{h/(\l+h)}$ as does our approach. Private KDE has also been explored in various other settings~\cite{duchi2013local} and under weaker notions of utility~\cite{HRW2013}.
\citenoun{ZRD2016} explore discrete naive Bayes under differential privacy, while we investigate parametric Gaussian and non-parametric KDE for class-conditional likelihoods.

As an example of an implicitly defined function, we consider regularized empirical risk minimization such as logistic regression, ridge regression, and the SVM. Previous mechanisms for private SVM release and ERM more generally~\cite{CM2008,RBHT2012,chaudhuri2011differentially,jain2014near,jain2013differentially,bassily2014private} require finite-dimensional feature mapping or translation-invariant kernels. \citenoun{HRW2013} consider more general mappings but provide $(\eps,\delta)$-differential privacy. Our treatment of regularized ERM extends to kernels that may be translation-variant with infinite-dimensional mappings, while providing stronger privacy guarantees.

\section{Preliminaries}

\paragraph*{Notation and Problem Setting.}

Throughout the paper, vectors are written in bold and the $i$-th component of a vector $\vect{x}$ is denoted by $x_i$.
We consider $\cX$ an arbitrary (possibly infinite) domain and $\cD \in \cX^n$ a database of $n$ points in $\cX$. We refer to $n$ as the size of the database $\cD$. For a positive integer $\l$, let $\cY=[0,1]^{\l}$ be a set of query points and $F\colon \cX^n\times\cY\ra\reals$ the target function. Once the database $\cD$ is fixed, we denote by $F_{\cD} = F(\cD,\cdot)$ the function parameterized by~$\cD$ that we aim to release. For example: \cD might represent a training set---over \cX a product space of feature vectors and labels---with \cY representing test points from the same feature space; $F_{\cD}$ would then be a classifier resulting from training on \cD. Section~\ref{sec:examples} presents examples for $F$. In Section~\ref{sec:mechanism}, we show how to privately release the function $F_{\cD}$ and we provide alternative error bounds depending on the regularity of $F$.

\begin{definition}\label{def:smoothness}
    Let $h$ be a positive integer and $T>0$. A function $f\colon [0,1]^{\l}\ra\reals$ is $(h,T)$-smooth if it is $C^h([0,1]^{\l})$ and its partial derivatives up to order $h$ are all bounded by~$T$.
\end{definition}

\begin{definition}\label{def:hoelder}
    Let $0\leq \gamma\leq 1$ and $L>0$. A function $f\colon [0,1]^{\l}\ra\reals$ is $(\gamma,L)$-H\"older continuous if, for every $\vect{x},\vect{y}\in [0,1]^{\l}$, $|f(\vect{x})-f(\vect{y})|\leq L \|\vect{x}-\vect{y}\|_{\infty}^\gamma$.
    When $\gamma=1$, we refer to $f$ as $L$-Lipschitz.
\end{definition}

Our goal is to develop a private release mechanism for the function $F_{\cD}$ in the \emph{non-interactive} setting. A non-interactive mechanism takes a function $F$ and a database $\cD$ as inputs and outputs a synopsis $\cA$ which can be used to evaluate the function $F_{\cD}$ on $\cY$ without accessing the database $\cD$ further.

\paragraph*{Differential Privacy.}

To provide strong privacy guarantees on the release of $F_{\cD}$, we adopt the well-established notion of differential privacy.

\begin{definition}[\citenaked{DMNS2006}]\label{def:diffprivacy}
	Let $\cR$ be a (possibly infinite) set of responses. A mechanism $\cM:\cX^\str \ra \cR$ (meaning that, for every $\cD\in\cX^\str=\bigcup_{n>0}\cX^n$, $\cM(\cD)$ is an $\cR$-valued random variable) is said to provide {\em $(\eps,\delta)$-differential privacy} for $\eps>0$ and $0\leq\delta< 1$ if, for every $n\in\nats$, for every pair $(\cD,\cD') \in \cX^n\times\cX^n$ of databases differing in one entry only (henceforth denoted by $\cD\sim\cD'$), and for every measurable $S \subseteq \cR$, we have $\Pr[\cM(\cD) \in S] \le e^\eps\Pr[\cM(\cD') \in S] + \delta.$
    If $\delta=0$ we simply say that $\cM$ provides $\eps$-differential privacy.
\end{definition}

By limiting the influence of data on the induced response distribution, a powerful adversary (with knowledge of all but one input datum, the mechanism up to random source, and unbounded computation) cannot effectively identify an unknown input datum from mechanism responses.
The Laplace mechanism~\cite{DMNS2006} is a generic tool for differential privacy: adding zero-mean Laplace noise\footnote{A $\Lap(\lambda)$-distributed real random variable $Z$ has probability density proportional to $\exp(-|y|/\lambda)$.} to a vector-valued function provides privacy if the noise is calibrated to the function's sensitivity.

\begin{definition}[\citenaked{DMNS2006}]\label{def:sensitivity}
    The {\em sensitivity} of a function $f:\cX^n\ra\reals^d$ is given by $ S(f) = \sup_{\cD\sim\cD'}\|f(\cD)-f(\cD')\|_1$,
    where the supremum is taken over all $\cD,\cD' \in \cX^n$ that differ in one entry only. The sensitivity of a function $F\colon\cX^n\times \cY\ra\reals^d$ is defined as $ S(F) = \sup_{\vect{y}\in \cY}S(F(\cdot,\vect{y}))$.
\end{definition}

\begin{lemma}[\citenaked{DMNS2006}]\label{lem:sensitivity}
    Let $f:\cX^n\ra\reals^d$ be a non-private function of finite sensitivity, and let $\vect{Z}\sim\Lap(S(f)/\eps)^d$. Then, the random function $\tilde{f}(\cD) = f(\cD)+\vect{Z}$ provides $\eps$-differential privacy.
\end{lemma}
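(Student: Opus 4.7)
My plan is to verify the differential privacy guarantee by directly bounding the ratio of output densities of $\tilde{f}(\cD)$ and $\tilde{f}(\cD')$ for an arbitrary pair of neighboring databases $\cD\sim\cD'$, and then integrating over an arbitrary measurable set $S\subseteq\reals^d$. Because $\vect{Z}$ has independent Laplace coordinates, the density of $\tilde{f}(\cD)$ factorizes, so the analysis reduces to a coordinate-wise computation followed by summing exponents.

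First, I would write $p_{\cD}(\vect{y})=\prod_{i=1}^d \frac{\eps}{2S(f)}\exp\!\left(-\frac{\eps}{S(f)}|y_i-f(\cD)_i|\right)$ and similarly $p_{\cD'}$. Taking the ratio and cancelling the normalizers gives
\begin{align*}
\frac{p_{\cD}(\vect{y})}{p_{\cD'}(\vect{y})} = \exp\!\left(\frac{\eps}{S(f)}\sum_{i=1}^d\bigl(|y_i-f(\cD')_i|-|y_i-f(\cD)_i|\bigr)\right).
\end{align*}
Next I would apply the reverse triangle inequality coordinate-wise, $|y_i-f(\cD')_i|-|y_i-f(\cD)_i|\leq |f(\cD)_i-f(\cD')_i|$, so that the sum in the exponent is at most $\|f(\cD)-f(\cD')\|_1$, which by Definition~\ref{def:sensitivity} is bounded above by $S(f)$. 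This yields the pointwise bound $p_{\cD}(\vect{y})\leq e^{\eps}\,p_{\cD'}(\vect{y})$ for every $\vect{y}\in\reals^d$.

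Finally, I would integrate this pointwise inequality over any measurable $S\subseteq\reals^d$ to obtain $\Pr[\tilde{f}(\cD)\in S]\leq e^{\eps}\Pr[\tilde{f}(\cD')\in S]$, matching Definition~\ref{def:diffprivacy} with $\delta=0$. There is essentially no obstacle here: the argument is entirely algebraic once the densities are written out, and the only subtlety is ensuring that the $\ell_1$ form of the exponent matches exactly the $\ell_1$ definition of sensitivity, which is what motivates the choice of noise scale $S(f)/\eps$ in the first place.
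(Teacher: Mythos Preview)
Your argument is correct and is the standard proof of the Laplace mechanism. The paper itself does not prove this lemma; it merely cites it as a known result from \citenoun{DMNS2006}, so there is no ``paper's own proof'' to compare against. Your density-ratio computation, coordinate-wise reverse triangle inequality, and appeal to the $\ell_1$ sensitivity are exactly the textbook derivation, and nothing is missing.
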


Given a mechanism, we measure its accuracy as follows.

\begin{definition}\label{def:accuracy}
    Let $F\colon\cX^n\times \cY\ra\reals$. A mechanism $\cM$ is {\em $(\alpha,\beta)$-accurate} with respect to $F_{\cD}$ if for any database $\cD\in\cX^n$ and $\cA=\cM(\cD)$, with probability at least $1-\beta$ over the randomness of $\cM$, $\sup_{\vect{y}\in\cY} |\cA(\vect{y}) - F_{\cD}(\vect{y})|\leq\alpha.$
\end{definition}

\section{The Bernstein Mechanism}\label{sec:mechanism}

Algorithm~\ref{alg:bernstein_mechanism} introduces a differentially-private mechanism for releasing $F_\cD: \cY\ra \reals$, a family of $(h,T)$-smooth or $(\gamma,L)$-H\"older continuous functions, parameterized by $\cD\in\cX^n$.

\begin{algorithm}[H]
    \begin{algorithmic}[1]
        \Statex \textit{Sanitization} -- \textbf{Inputs:} private dataset $\cD\in\cX^{n}$; sensitivity $S(F)$ and oracle access to target $F\colon \cX^n\times\cY\ra\reals$
        \Statex \textbf{Parameters:} cover size $k$, Bernstein order $h$ positive integers; privacy budget $\eps>0$
        \State $P\gets\left(\left\{0,1/k,2/k,\ldots,1\right\}\right)^{\l}$ \Comment{Lattice cover of \cY}
        \State $\lambda \gets S(F)(k+1)^{\l}/\eps$ \Comment{Perturbation scale}
        \State For each $\vect{p}=(p_1,\ldots,p_{\l})\in P$:
        \State \hspace{2em} $\widetilde{F_{\cD}}(\vect{p}) \gets F_{\cD}(\vect{p}) + Z$, where $Z\stackrel{i.i.d.}{\sim}\Lap(\lambda)$
        \State \textbf{Return:} $\left\{\widetilde{F_{\cD}}(\vect{p})\mid \vect{p}\in P\right\}$
        \Statex \filltoend
        \Statex \textit{Evaluation} -- \textbf{Inputs:} query $\vect{y}\in\cY$; private response $\left\{\widetilde{F_{\cD}}(\vect{p})\mid \vect{p}\in P\right\}$
        \State $b^{(h)}_{\nu_i,k}\gets$
        Compute basis \Comment{See Definition~\ref{def:iteretatedbernsteinpoly}}
        \State \textbf{Return:} $ \sum_{j=1}^{\l}\sum_{\nu_j=0}^{k}\widetilde{F_{\cD}}\left(\frac{\nu_1}{k},\cdots,\frac{\nu_{\l}}{k}\right)\prod_{i=1}^{\l}{b^{(h)}_{\nu_i,k}(y_i)}$
    \end{algorithmic}
    \caption{The Bernstein mechanism~\label{alg:bernstein_mechanism}}
\end{algorithm}

The mechanism makes use of the iterated Bernstein polynomial of $F_{\cD}$, which we introduce
next (for a comprehensive survey refer to~\citenaked{L1953}, \citenaked{M1973}). This approximation consists of a linear combination of so-called Bernstein basis polynomials, whose coefficients are evaluations of target $F_{\cD}$ on a (lattice) cover $P$.

We briefly introduce the univariate Bernstein basis polynomials and state some of their properties.

\begin{definition}\label{def:bernsteinbasis}
    Let $k$ be a positive integer. The {\em Bernstein basis polynomials} of degree $k$ are
    defined as $b_{\nu,k}(y) = \binom{k}{\nu}y^{\nu}(1-y)^{k-\nu}$ for $\nu=0,\ldots,k$.
\end{definition}

\begin{proposition}[\citenaked{L1953}]\label{prop:bernsteinproperties}
    For every $y\in[0,1]$, any positive integer $k$ and $0\leq\nu\leq k$,
    we have $b_{\nu,k}(y)\geq 0$ and $\sum_{\nu=0}^{k} b_{\nu,k}(y)=1$.
\end{proposition}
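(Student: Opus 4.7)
The plan is to verify the two claims directly from the explicit formula $b_{\nu,k}(y) = \binom{k}{\nu}y^{\nu}(1-y)^{k-\nu}$ given in Definition~\ref{def:bernsteinbasis}, without invoking any external result. Both properties are elementary consequences of the definition together with the binomial theorem.

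First I would dispense with non-negativity. For $y \in [0,1]$, both $y \geq 0$ and $1-y \geq 0$, so each factor $y^{\nu}$ and $(1-y)^{k-\nu}$ is non-negative (using the convention $0^{0}=1$ to cover the boundary cases $y\in\{0,1\}$ together with $\nu\in\{0,k\}$). The binomial coefficient $\binom{k}{\nu}$ is a non-negative integer for $0\leq \nu \leq k$. The product of non-negative reals is non-negative, giving $b_{\nu,k}(y)\geq 0$.

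Next I would establish the partition-of-unity identity by recognising the sum as a binomial expansion. Specifically, applying the binomial theorem to $(a+b)^k$ with $a=y$ and $b=1-y$ yields
\begin{equation*}
1 = 1^{k} = \bigl(y + (1-y)\bigr)^{k} = \sum_{\nu=0}^{k}\binom{k}{\nu} y^{\nu}(1-y)^{k-\nu} = \sum_{\nu=0}^{k} b_{\nu,k}(y),
\end{equation*}
which is the desired identity. There is no real obstacle here; the statement is a textbook fact collected for later reference in the analysis of Algorithm~\ref{alg:bernstein_mechanism}, and the only subtlety worth flagging is the standard $0^{0}=1$ convention at the endpoints $y\in\{0,1\}$, which is implicit in the statement of Definition~\ref{def:bernsteinbasis}.
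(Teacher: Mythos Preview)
Your argument is correct and is exactly the standard verification of these classical facts. The paper does not supply its own proof of Proposition~\ref{prop:bernsteinproperties}; it simply cites \citenoun{L1953} and uses the result, so there is nothing to compare against beyond noting that your direct appeal to the binomial theorem is the canonical justification.
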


In order to introduce the iterated Bernstein polynomials, we first need to recall the Bernstein operator.

\begin{definition}\label{def:bernsteinpoly}
    Let $f\colon [0,1]\ra \reals$ and $k$ be a positive integer. The {\em Bernstein polynomial} of $f$ of degree $k$ is defined as
    $B_{k}(f;y) = \sum_{\nu=0}^{k}f\left(\nu/k\right)b_{\nu,k}(y)$.
\end{definition}

The \emph{Bernstein operator} $B_{k}$ maps a function $f$, defined on $[0,1]$, to $B_{k} f$, where the function $B_{k} f$ evaluated at $y$ is $B_k(f;y)$. Note that the Bernstein operator is linear and if $f(y)\in[a_1,a_2]$ for every $y\in[0,1]$, then from Proposition~\ref{prop:bernsteinproperties} it follows that $B_{k}(f;y)\in[a_1,a_2]$ for every positive integer $k$ and $y\in[0,1]$. Moreover, it is not hard to show that any linear function is a fixed point for $B_{k}$\iffullelse{. For completeness, we provide a short proof in Appendix~\ref{sec:fixedpoints}.}{ (\cf the full report~\citenaked{AR2016}).}

\begin{definition}[\citenaked{M1973}]\label{def:iteretatedbernsteinpoly}
    Let $h$ be a positive integer. The \emph{iterated Bernstein operator} of order $h$ is defined as the sequence of linear operators $B^{(h)}_{k} = I-(I-B_k)^{h} = \sum_{i=1}^{h}{h \choose i}(-1)^{i-1}B^{i}_{k}$,
    where $I=B^{0}_{k}$ denotes the identity operator and $B^{i}_{k}$ is defined inductively as $B^{i}_{k}=B_{k}\circ B^{i-1}_{k}$ for $i\geq 1$. The \emph{iterated Bernstein polynomial} of order $h$ can then be computed as:
    \[B^{(h)}_{k}(f;y) = \sum_{\nu=0}^{k}f\left(\frac{\nu}{k}\right)b^{(h)}_{\nu,k}(y)\ ,\]
    where $b^{(h)}_{\nu,k}(y) = \sum_{i=1}^{h}{h \choose i}(-1)^{i-1}B^{i-1}_{k}(b_{\nu,k};y)\ .$
\end{definition}

We observe that $B^{(1)}_{k} = B_{k}$. Although the bases $b^{(h)}_{\nu,k}$ are not always positive for $h\geq 2$, we still have $\sum_{\nu=0}^{k}b^{(h)}_{\nu,k}(y)=1$ for every $y\in[0,1]$.
The iterated Bernstein polynomial of a multivariate function $f\colon [0,1]^{\l}\ra \reals$ is analogously defined.

\begin{definition}\label{def:iteratedbernsteinpolymultivariate}
    Assume $f\colon [0,1]^{\l}\ra \reals$ and let $k_1,\ldots,k_{\l},h$ be positive integers. The (multivariate) iterated Bernstein polynomial of $f$ (of order $h$) is defined as
    \begin{equation*}
        B^{(h)}_{k_1,\ldots,k_{\l}}(f;\vect{y}) = \sum_{j=1}^{\l}\sum_{\nu_j=0}^{k_j}f\left(\frac{\nu_1}{k_1},\cdots,\frac{\nu_{\l}}{k_{\l}}\right)\prod_{i=1}^{\l}{b^{(h)}_{\nu_i,k_i}(y_i)}.
    \end{equation*}
\end{definition}

For ease of exposition, we fix user-selected $k\in\mathbb{N}$ such that $k_1=\ldots=k_{\l}=k$. The Bernstein mechanism perturbs the evaluation of $F_{\cD}$ on a lattice cover $P$ of $\cY=[0,1]^{\l}$ parameterized by $k$.

\section{Analysis of Mechanism Privacy and Utility}

In the following result, we assume $\l$ to be an arbitrary but fixed constant with $\cY = [0,1]^{\l}$.
We underline that this is a common assumption in the differential privacy literature, especially when dealing with Euclidean spaces~\cite{BLR2008,dwork2009differential,wasserman2010statistical,lei2011differentially,WFZW2013}.

\begin{theorem}[Main Theorem]\label{thm:maintheorem}
    Let $\l,h\in\mathbb{N_{+}}$, $0<\gamma\leq 1$, $L>0$ and $T>0$ be constants. Let $\cX$ be an arbitrary space and $\cY = [0,1]^{\l}$. Let furthermore $F\colon \cX^{n}\times \cY\ra\reals$ with $S(F) = o(1)$. For $\eps>0$, the Bernstein mechanism $\cM$ provides $\eps$-differential privacy. Moreover, for $0<\beta<1$ the mechanism $\cM$ is $(\alpha,\beta)$-accurate with error scaling as follows, where hidden constants depend on $\l,L,\gamma,T,h$.
    \begin{enumerate}[(i)]
        \item
        If $F_{\cD}$ is $(2h,T)$-smooth for every $\cD\in\cX^{n}$,
        there exists $k=k(S(F),\eps,\beta,\l,h,T)$ such that $\alpha = O\left(\frac{S(F)}{\eps}\log(1/\beta)\right)^{\frac{h}{\l + h}}$;
        \item
        If $F_{\cD}$ is $(\gamma,L)$-H\"older continuous for every $\cD\in\cX^{n}$,
        there exists $k=k(S(F),\eps,\beta,\l,\gamma,L)$ such that $\alpha = O\left(\frac{S(F)}{\eps}\log(1/\beta)\right)^{\frac{\gamma}{2\l + \gamma}}$; and
        \item
        If $F_{\cD}$ is linear for every $\cD\in\cX^{n}$,
        there exists a constant $k$ such that $\alpha = O\left(\frac{S(F)}{\eps}\log(1/\beta)\right)$.
    \end{enumerate}
    Moreover, if $1/S(F)\leq\mathrm{poly}(n)$, then the running-time of the mechanism and the running-time per evaluation are both polynomial in $n$ and $1/\eps$.
\end{theorem}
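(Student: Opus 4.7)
\emph{The plan.} I would prove privacy and utility separately, using the Laplace mechanism for the former and an approximation-plus-noise decomposition for the latter, and then read off the running time.

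\emph{Privacy.} The sanitization step is exactly the Laplace mechanism applied to the $(k+1)^{\l}$-dimensional vector $(F_{\cD}(\vect{p}))_{\vect{p}\in P}$. By Definition~\ref{def:sensitivity}, each coordinate has sensitivity at most $S(F)$, so the $\ell_1$-sensitivity of the whole vector is at most $|P|\cdot S(F)=(k+1)^{\l}S(F)$. Adding i.i.d.\ $\Lap(\lambda)$ noise with $\lambda=(k+1)^{\l}S(F)/\eps$ therefore gives $\eps$-differential privacy by Lemma~\ref{lem:sensitivity}. The evaluation step is a deterministic function of the released vector and of the public query $\vect{y}$, so by post-processing the end-to-end mechanism remains $\eps$-differentially private.

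\emph{Utility.} For any $\vect{y}\in\cY$ I would decompose
\begin{align*}
\cA(\vect{y})-F_{\cD}(\vect{y}) &= \bigl(B^{(h)}_{k,\ldots,k}(F_{\cD};\vect{y})-F_{\cD}(\vect{y})\bigr) \\
&\quad + \sum_{\vect{\nu}\in P}Z_{\vect{\nu}}\prod_{i=1}^{\l}b^{(h)}_{\nu_i,k}(y_i)\ ,
\end{align*}
with $Z_{\vect{\nu}}$ i.i.d.\ $\Lap(\lambda)$. The first summand is the deterministic iterated-Bernstein approximation error: I would invoke classical results (\citenoun{M1973}) giving $O(k^{-h})$ under $(2h,T)$-smoothness (case i), the standard modulus-of-continuity bound for the base operator giving $O(k^{-\gamma/2})$ under $(\gamma,L)$-H\"older continuity (case ii), and the fact that linear functions are fixed points of $B_k$ (hence of $B^{(h)}_k$), so the approximation error vanishes (case iii). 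For the noise summand I would upper bound it pointwise by $\max_{\vect{\nu}}|Z_{\vect{\nu}}|\cdot\sup_{\vect{y}}\sum_{\vect{\nu}}\prod_i|b^{(h)}_{\nu_i,k}(y_i)|$; the latter factorizes as $\prod_i\sup_{y_i}\sum_{\nu_i}|b^{(h)}_{\nu_i,k}(y_i)|\le C_h^{\l}$, where $C_h$ depends only on $h$ (this uniform $L^1$-bound is needed since the iterated basis need not be nonnegative when $h\ge 2$). A standard union bound for the maximum of $(k+1)^{\l}$ i.i.d.\ Laplace variables then yields $\max_{\vect{\nu}}|Z_{\vect{\nu}}|=O(\lambda\log((k+1)^{\l}/\beta))$ with probability at least $1-\beta$, uniformly in $\vect{y}$.

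\emph{Balancing and running time.} Combining, with probability at least $1-\beta$ the uniform error is $O(k^{-\kappa})+O\!\bigl((S(F)(k+1)^{\l}/\eps)\log((k+1)^{\l}/\beta)\bigr)$ with $\kappa=h,\gamma/2,\infty$ for cases (i)--(iii). Balancing the two terms gives the optimal choice $k=\Theta\bigl((\eps/(S(F)\log(1/\beta)))^{1/(\l+\kappa)}\bigr)$ and hence $\alpha=O\bigl((S(F)\log(1/\beta)/\eps)^{\kappa/(\l+\kappa)}\bigr)$ in cases (i)--(ii); in the linear case any constant $k$ kills the approximation error and the noise bound gives the stated linear rate directly. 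For the complexity, sanitization performs $(k+1)^{\l}$ oracle calls to $F_{\cD}$ and evaluation computes $O(h(k+1)^{\l})$ basis values per query; under $1/S(F)\le\mathrm{poly}(n)$ and with $\l,h$ constant, the chosen $k$ is $\mathrm{poly}(n,1/\eps)$, making both phases polynomial as claimed. The main obstacle lies not in the differential-privacy accounting but in the two approximation-theoretic inputs—the $O(k^{-h})$ rate of the iterated Bernstein operator on $(2h,T)$-smooth functions and the uniform $L^1$-boundedness of its bases—both of which are classical and would be cited rather than re-derived.
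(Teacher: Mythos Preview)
Your proof follows the same skeleton as the paper: Laplace mechanism for privacy, a triangle-inequality split of the error into Bernstein approximation error plus perturbation error, classical rates (Micchelli for smooth, Kac/Math\'e for H\"older, the fixed-point property for linear) for the former, and balancing $k$ at the end. One small omission: the approximation rates you cite are univariate, and the paper lifts them to $[0,1]^{\l}$ by an induction on $\l$ that costs only a factor $\l$; you should make that step explicit.

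The one substantive difference is how you control the perturbation term. The paper does \emph{not} pass through $\max_{\vect{\nu}}|Z_{\vect{\nu}}|$. Instead, for each fixed $\vect{y}$ it recognizes $\sum_{\vect{\nu}} Z_{\vect{\nu}}\prod_i b^{(h)}_{\nu_i,k}(y_i)$ as an $L^1$-bounded combination of i.i.d.\ symmetric log-concave variables and applies a peakedness inequality of Proschan (1965): any convex combination of such variables has tails no heavier than a single one. This yields noise $O(\lambda\log(1/\beta))$ with no dependence on the number of lattice points. Your route through $\max_{\vect{\nu}}|Z_{\vect{\nu}}|$ is more elementary and uses the same $L^1$ bound $(2^h-1)^{\l}$ on the iterated basis, but the union bound over the $(k+1)^{\l}$ Laplace variables contributes $O(\lambda\log((k+1)^{\l}/\beta))$. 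That extra $\l\log k$ does not disappear when you balance: with $k\asymp(\eps/(S(F)\log(1/\beta)))^{1/(\l+\kappa)}$ you pick up a multiplicative $\log(\eps/S(F))$ factor in cases~(i) and~(ii), so your final ``hence $\alpha=O(\ldots)$'' silently drops a term and does not quite deliver the stated rates. Case~(iii) is unaffected since $k$ is constant there. If you want the theorem exactly as written, use the Proschan-type concentration; if a polylogarithmic loss in $\eps/S(F)$ is acceptable, your argument goes through as is.
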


\subsection{Proof of the Main Theorem\footnote{For sake of clarity, in \iffullelse{Appendix~\ref{sec:onedimproof}}{the full report~\citenoun{AR2016}} we provide a self-contained proof of Theorem~\ref{thm:maintheorem} for $\l=1$. Although it is not a prerequisite to the general result, it reflects the building blocks used in this section.}}\label{sec:mainproof}

To prove privacy we note that only the coefficients of the Bernstein polynomial of $F_{\cD}$ are sensitive and need to be protected. In order to provide $\eps$-differential privacy, these coefficients---evaluations of target $F_{\cD}$ on a cover---are perturbed by means of Lemma~\ref{lem:sensitivity}. In this way, we can release the sanitized coefficients and use them for unlimited, efficient evaluation of the approximation of $F_{\cD}$ over $\cY$, without further access to the data $\cD$. To establish utility, we separately analyze error due to the polynomial approximation of $F_{\cD}$ and error due to perturbation. 

In order to analyze the accuracy of our mechanism, we denote by $\widetilde{B^{(h)}_{k}}(F_{\cD};\vect{y})$ the iterated Bernstein polynomial of order $h$ constructed using the coefficients output by the mechanism $\cM$. The error $\alpha$ introduced by the mechanism can be expressed as follows:
\begin{align}
    \alpha &= \max_{\vect{y}\in[0,1]^{\l}}\left|F_{\cD}(\vect{y}) - \widetilde{B^{(h)}_{k}}(F_{\cD};\vect{y})\right| \label{eq:error1main} \\
    &\leq \max_{\vect{y}\in[0,1]^{\l}}\left|\widetilde{B^{(h)}_{k}}(F_{\cD};\vect{y}) - B^{(h)}_{k}(F_{\cD};\vect{y})\right|\label{eq:error2main}\\
    \ &\mathrel{\phantom{\leq\ }}   
    \mathrel + \max_{\vect{y}\in[0,1]^{\l}}\left|F_{\cD}(\vect{y}) - B^{(h)}_{k}(F_{\cD};\vect{y})\right|\nonumber.
\end{align}

For every $\vect{y}\in [0,1]^{\l}$, the first summand in Equation~\eqref{eq:error2main} consists of the absolute value of an affine combination of independent Laplace-distributed random variables.

\begin{proposition}\label{prop:affinelaplacemultivariate}
    Let $\Gamma=\{\vect{\nu}\in\mathbb{N}^{\l}\mid 0\leq \nu_j\leq k \text{ for } 1\leq j\leq \l\}$. For every $\vect{\nu}=(\nu_1,\ldots,\nu_{\l})\in\Gamma$ let $Z_{\vect{\nu}}\stackrel{i.i.d.}{\sim} \Lap(\lambda)$. Moreover, let $\tau\geq0$ and constant $C_{h,\l}$ depend only on $h,\l$. Then:
    \[
    \Pr\left[\max_{\vect{y}\in[0,1]^{\l}}\left|\sum_{j=1}^{\l}\sum_{\nu_j=0}^{k}Z_{\vect{\nu}}\prod_{i=1}^{\l}{b^{(h)}_{\nu_i,k}(y_i)}\right| \geq \tau\right] \leq
    e^{-\tau/(C_{h,\l}\lambda)}.
    \]
\end{proposition}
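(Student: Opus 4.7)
The overall strategy is to push the supremum over $\vect{y}$ inside via a single pathwise inequality, thereby reducing the continuous tail problem to a tail bound on the maximum of finitely many i.i.d.\ Laplaces. Set $w_{\vect{\nu}}(\vect{y}) := \prod_{i=1}^{\l} b^{(h)}_{\nu_i,k}(y_i)$ so that the quantity inside the probability equals $g(\vect{y}) := \sum_{\vect{\nu}\in\Gamma} w_{\vect{\nu}}(\vect{y})\, Z_{\vect{\nu}}$. For every fixed $\vect{y}$, H\"older's inequality yields $|g(\vect{y})| \leq \|w(\vect{y})\|_1 \cdot \max_{\vect{\nu}} |Z_{\vect{\nu}}|$, and since the right-hand side no longer depends on $\vect{y}$, the uniform bound $\max_{\vect{y}} |g(\vect{y})| \leq \bigl(\sup_{\vect{y}}\|w(\vect{y})\|_1\bigr)\,\max_{\vect{\nu}}|Z_{\vect{\nu}}|$ follows immediately.

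The crux is a $k$-independent bound on $\sup_{\vect{y}}\|w(\vect{y})\|_1$. By the tensor-product form of $w_{\vect{\nu}}$, it suffices to show the univariate estimate $\sum_{\nu=0}^{k}|b^{(h)}_{\nu,k}(y)| \leq 2^h-1$ for all $y\in[0,1]$ and all $k$. Expanding the iterated basis from Definition~\ref{def:iteretatedbernsteinpoly} and applying the triangle inequality gives
\[\sum_{\nu=0}^{k}|b^{(h)}_{\nu,k}(y)| \leq \sum_{i=1}^{h}\binom{h}{i}\sum_{\nu=0}^{k}\bigl|B^{i-1}_{k}(b_{\nu,k};y)\bigr|.\]
Each Bernstein basis $b_{\nu,k}$ is non-negative by Proposition~\ref{prop:bernsteinproperties}, and $B_k$ preserves non-negativity (it is a positive linear operator since all $b_{\nu,k}\geq 0$), so the absolute values can be dropped. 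Linearity of $B_k$ then allows commuting the inner sum with the operator: $\sum_{\nu}B_k^{i-1}(b_{\nu,k};y) = B_k^{i-1}\bigl(\sum_{\nu}b_{\nu,k};y\bigr) = B_k^{i-1}(1;y) = 1$, using that constants are fixed points of $B_k$. Summing $\binom{h}{i}$ over $i=1,\ldots,h$ yields $2^h-1$, and tensorizing gives $\sup_{\vect{y}}\|w(\vect{y})\|_1 \leq (2^h-1)^{\l}$.

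It remains to bound $M := \max_{\vect{\nu}\in\Gamma}|Z_{\vect{\nu}}|$. A union bound over the $|\Gamma|=(k+1)^{\l}$ coordinates together with the standard tail $\Pr[|Z_{\vect{\nu}}|\geq s]\leq e^{-s/\lambda}$ gives $\Pr[M\geq s]\leq (k+1)^{\l}e^{-s/\lambda}$, and setting $s = \tau/(2^h-1)^{\l}$ produces $\Pr[\max_{\vect{y}}|g(\vect{y})|\geq\tau]\leq (k+1)^{\l}\exp\bigl(-\tau/((2^h-1)^{\l}\lambda)\bigr)$. The main obstacle I anticipate is absorbing the combinatorial prefactor $(k+1)^{\l}$ into the exponent so as to obtain the clean form $\exp(-\tau/(C_{h,\l}\lambda))$ stated in the proposition: this amounts either to enlarging $C_{h,\l}$ to dominate the $\l\log(k+1)$ slack, or to restricting to the regime $\tau \gtrsim \l\log(k+1)\cdot\lambda$ where the bound is non-trivial---exactly the regime relevant to the application in the proof of Theorem~\ref{thm:maintheorem}, where $\tau$ is ultimately chosen proportional to $\lambda\log(1/\beta)$ with $k$ calibrated against $\eps,\beta$.
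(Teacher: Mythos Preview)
Your bound $\sup_{\vect{y}}\|w(\vect{y})\|_1 \leq (2^h-1)^{\l}$ is exactly the paper's argument (positivity of $B_k$, linearity, $B_k 1 = 1$), so that part is fine. The gap is in the tail step. By passing through $\max_{\vect{\nu}}|Z_{\vect{\nu}}|$ and a union bound you necessarily pick up the factor $(k+1)^{\l}$, and the proposition requires $C_{h,\l}$ to depend \emph{only} on $h$ and $\l$. Your two proposed escapes do not work: you cannot ``enlarge $C_{h,\l}$'' to swallow $\l\log(k+1)$ because that makes it $k$-dependent; and restricting to $\tau \gtrsim \l\log(k+1)\lambda$ proves a strictly weaker statement than the one claimed (for all $\tau\geq 0$). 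Downstream this would contaminate the utility bound of Theorem~\ref{thm:maintheorem} with an extra $\log k$ factor, so it is not merely cosmetic.

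The idea you are missing is that one should not control the \emph{maximum} of the $Z_{\vect{\nu}}$ at all. Since the $Z_{\vect{\nu}}$ are i.i.d.\ and symmetric, $\sum_{\vect{\nu}} w_{\vect{\nu}}(\vect{y}) Z_{\vect{\nu}}$ has the same distribution as $\sum_{\vect{\nu}} |w_{\vect{\nu}}(\vect{y})| Z_{\vect{\nu}}$ for each fixed $\vect{y}$. Dividing by $U(\vect{y})=\sum_{\vect{\nu}}|w_{\vect{\nu}}(\vect{y})|$ turns the coefficients into a probability vector, and a classical peakedness result of Proschan (1965) says that for any convex combination of i.i.d.\ symmetric log-concave variables,
\[
\Pr\!\left[\Bigl|\sum_i b_i Z_i\Bigr|\geq \tau'\right] \leq \Pr\bigl[|Z_1|\geq \tau'\bigr]=e^{-\tau'/\lambda}.
\]
Undoing the normalization with $U(\vect{y})\leq (2^h-1)^{\l}$ gives, for every $\vect{y}$, the tail $e^{-\tau/((2^h-1)^{\l}\lambda)}$ with no dependence on $k$ whatsoever. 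That is the mechanism by which the paper obtains a constant $C_{h,\l}=(2^h-1)^{\l}$; your H\"older-plus-union-bound route cannot reproduce it.
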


The proof of Proposition~\ref{prop:affinelaplacemultivariate} follows from a result of~\citenoun{P1965} on the concentration of convex combinations of random variables drawn i.i.d. from a log-concave symmetric distribution\iffullelse{. For completeness, we give the full proof in Appendix~\ref{sec:affinelaplacemultivariate}.}{ (\cf the full report~\citenaked{AR2016}).}
Proposition~\ref{prop:affinelaplacemultivariate} implies that with probability at least $1-\beta$ the first summand in Equation~\eqref{eq:error2main} is bounded by $O\left(S(F)k^{\l}\log(1/\beta)/\eps\right)$.
In order to bound the second summand we make use of the following (unidimensional) convergence rates.

\begin{theorem}[\citenaked{M1973}]\label{thm:bernstainapproxsmooth}
    Let $h$ be a positive integer and $T>0$. If $f\colon [0,1]\ra \reals$ is a $(2h,T)$-smooth function,
    then, for all positive integers $k$ and $y\in[0,1]$, $\left|f(y)-B^{(h)}_{k}(f;y)\right|\leq TD_hk^{-h}$,
    where $D_h$ is a constant independent of $k,f$ and $y\in[0,1]$.
\end{theorem}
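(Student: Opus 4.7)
The plan is to exploit the binomial identity $I - B_k^{(h)} = (I - B_k)^h$, which is immediate from Definition~\ref{def:iteretatedbernsteinpoly}, reducing the theorem to a uniform bound on $\|(I-B_k)^h f\|_\infty$ for a $(2h,T)$-smooth $f$. The central analytic tool is the probabilistic representation $B_k(f;y) = \E[f(S_y/k)]$ with $S_y \sim \mathrm{Bin}(k,y)$, combined with the classical moment formulas: $\E[(S_y/k - y)^j]$ is a polynomial in $1/k$ whose smallest power is $k^{-\lceil j/2 \rceil}$ and whose coefficients are uniformly bounded for $y \in [0,1]$.

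First I would expand $f(S_y/k)$ about $y$ by Taylor's theorem to order $2h-1$, take expectations, and apply the moment formulas to derive an asymptotic expansion
\begin{equation*}
(B_k f - f)(y) = \sum_{j=1}^{h-1} \frac{\psi_j[f](y)}{k^j} + \frac{R_{k,h}[f](y)}{k^h},
\end{equation*}
where each $\psi_j$ is a linear differential operator of order at most $2j$ with polynomial coefficients in $y$, and the remainder satisfies $\|R_{k,h}[f]\|_\infty \leq c_h \|f^{(2h)}\|_\infty \leq c_h T$ for a constant $c_h$ depending only on $h$. This recovers the classical $O(1/k)$ rate for the ordinary Bernstein operator and, more importantly, exposes the leading error terms that $(I - B_k)^h$ is designed to cancel.

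The heart of the argument is the iteration. The target estimate is $\|(I - B_k)^h f\|_\infty \leq T D_h k^{-h}$, and the intuition is that because $B_k$ reproduces affine functions, each factor $(I - B_k)$ annihilates every $O(1)$ component of its argument; pairing this with two orders of Taylor expansion generates a factor of $k^{-1}$ per application via the second central moment $y(1-y)/k$. The main obstacle will be executing this cleanly: a naive induction $m \mapsto m+1$ built on the one-step expansion loses derivatives too quickly, since $\psi_j$ strips $2j$ orders of smoothness from its input while the remaining $m$ iterations of $(I-B_k)$ still demand $2m$ orders. I would therefore circumvent the direct induction by representing $(I - B_k)^h f$ as an explicit kernel integral against $f^{(2h)}$: using $(I - B_k)^h = \sum_{i=0}^h \binom{h}{i}(-1)^i B_k^i$ together with repeated integration by parts, the fact that each $(I - B_k)$ annihilates affine polynomials absorbs two orders of differentiation per factor. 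The resulting kernel is a finite sum of mixed binomial moments whose sup norm is $O(k^{-h})$, yielding the claimed bound with $D_h$ determined by the binomial weights and the combinatorics of the moment expansion.
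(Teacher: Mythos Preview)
The paper does not prove this statement: it is quoted as a result of \citenoun{M1973} and invoked as a black box in Section~\ref{sec:mainproof} to bound the second summand in Equation~\eqref{eq:error2main}. There is therefore no proof in the paper against which to compare your proposal.

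On the substance of your sketch: the reduction via $f - B_k^{(h)} f = (I-B_k)^h f$ and the probabilistic Taylor expansion yielding $B_k f - f = \sum_{j=1}^{h-1} k^{-j}\psi_j[f] + O(k^{-h})$ with $\psi_j$ a differential operator of order $\leq 2j$ are correct and are indeed the backbone of Micchelli's argument. Your diagnosis that a naive one-step induction loses derivatives too fast is also accurate. The gap is in your final paragraph. The claim that expanding $(I-B_k)^h = \sum_i \binom{h}{i}(-1)^i B_k^i$ and integrating by parts yields a kernel against $f^{(2h)}$ with sup norm $O(k^{-h})$ is exactly the hard part, and ``each $(I-B_k)$ annihilates affine polynomials so it absorbs two derivatives'' does not by itself produce that kernel: the one-step Peano kernel for $I-B_k$ is only piecewise polynomial in $y$, so you cannot simply feed it back into another copy of $I-B_k$ and invoke the same second-order remainder estimate. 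Micchelli's actual route is to iterate the \emph{asymptotic expansion} itself---substituting the expansion for $B_k f - f$ into $(I-B_k)^h$ and tracking how the operators $\psi_j$ compose---so that the binomial alternation algebraically cancels every term of order below $k^{-h}$. Your proposal has the right ingredients but stops short of that cancellation mechanism; as written, the last step assumes what is to be proved.
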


\begin{theorem}[\citenaked{K1938}; \citenaked{M1999}]\label{thm:bernstainapproxhoelder}
    Let $0<\gamma\leq 1$ and $L>0$. If $f\colon [0,1]\ra \reals$ is a $(\gamma,L)$-H\"older continuous function,
    then $\left|f(y)-B^{(1)}_{k}(f;y)\right|\leq L\left(4k\right)^{-\gamma/2}$ for all positive integers $k$ and $y\in[0,1]$.
\end{theorem}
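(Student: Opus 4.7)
The plan is to exploit the probabilistic representation of the Bernstein operator. For fixed $y\in[0,1]$, let $Y$ be a $\mathrm{Binomial}(k,y)$ random variable. Then $\Pr[Y=\nu]=\binom{k}{\nu}y^{\nu}(1-y)^{k-\nu}=b_{\nu,k}(y)$, so by Definition~\ref{def:bernsteinpoly},
\begin{equation*}
B^{(1)}_{k}(f;y)=B_{k}(f;y)=\sum_{\nu=0}^{k}f(\nu/k)\,\Pr[Y=\nu]=\E\!\left[f(Y/k)\right].
\end{equation*}
In particular $f(y)=\E[f(y)]$ trivially, so the error can be rewritten as a single expectation over the randomness of $Y$.

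Next I would bound the error by moving the absolute value inside the expectation and applying the Hölder condition pointwise. By the triangle inequality (or Jensen's inequality applied to $|\cdot|$) and Definition~\ref{def:hoelder} with $\l=1$,
\begin{equation*}
\bigl|f(y)-B_{k}(f;y)\bigr|
=\bigl|\E[f(y)-f(Y/k)]\bigr|
\leq \E\bigl|f(y)-f(Y/k)\bigr|
\leq L\,\E\!\left[\,|y-Y/k|^{\gamma}\,\right].
\end{equation*}

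The crucial step is then to move the $\gamma$-th power outside the expectation via Jensen's inequality. Writing $|y-Y/k|^{\gamma}=\bigl(|y-Y/k|^{2}\bigr)^{\gamma/2}$ and noting that $t\mapsto t^{\gamma/2}$ is concave on $[0,\infty)$ since $0<\gamma/2\leq 1/2$, Jensen yields
\begin{equation*}
\E\!\left[\,|y-Y/k|^{\gamma}\,\right]
\leq \bigl(\E\,|y-Y/k|^{2}\bigr)^{\gamma/2}
= \bigl(\mathrm{Var}(Y/k)\bigr)^{\gamma/2}
= \left(\frac{y(1-y)}{k}\right)^{\gamma/2}.
\end{equation*}
Maximizing $y(1-y)$ over $[0,1]$ gives $y(1-y)\leq 1/4$, so $\E|y-Y/k|^{\gamma}\leq (4k)^{-\gamma/2}$, and combining with the preceding display proves the claimed bound $|f(y)-B^{(1)}_{k}(f;y)|\leq L(4k)^{-\gamma/2}$ uniformly in $y\in[0,1]$.

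There is no real obstacle here: the argument is essentially the classical Bernstein-style proof of the Weierstrass theorem adapted from a modulus-of-continuity estimate to the Hölder modulus. The only subtlety worth flagging is that Jensen's inequality is applied in the ``wrong'' direction for $x^{\gamma}$ with $\gamma\in(0,1]$, which is exactly why the exponent on $k$ degrades from the smooth-function rate $1/k$ down to $k^{-\gamma/2}$; writing $|\cdot|^{\gamma}=(|\cdot|^{2})^{\gamma/2}$ and using concavity of $t^{\gamma/2}$ is the cleanest way to handle this and simultaneously introduces the variance of $Y/k$, which is where the universal $1/(4k)$ constant comes from.
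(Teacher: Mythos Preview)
The paper does not supply its own proof of this theorem; it is quoted as a known result from the literature (Kac~1938, Math\'e~1999) and invoked as a black-box ingredient in the proof of the main theorem. Your argument is correct and is precisely the classical probabilistic proof: represent $B_{k}(f;y)$ as $\E[f(Y/k)]$ for $Y\sim\mathrm{Binomial}(k,y)$, apply the H\"older condition inside the expectation, then use concavity of $t\mapsto t^{\gamma/2}$ (Jensen) to reduce to the binomial variance $y(1-y)/k\leq 1/(4k)$. There is nothing further to compare against.
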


By induction, it is possible to show that the approximation error of the multivariate iterated Bernstein polynomial can be bounded by $O(\l g(k))=O(g(k))$, if the error of the corresponding univariate polynomial is bounded by $g(k)$\iffullelse{. For completeness, we provide a proof in Appendix~\ref{sec:induction}.}{ (\cf the full report~\citenaked{AR2016}).}

All in all, the error $\alpha$ introduced by the mechanism can thus be bounded by
\begin{equation}\label{eq:error4main}
    \alpha= O\left(g(k) + \frac{S(F)k^{\l}}{\eps}\log(1/\beta)\right).
\end{equation}

Since $g(k)$ is a decreasing function in $k$ and the second summand in Equation~\eqref{eq:error4main} is an increasing function in $k$, the optimal value for $k$ (up to a constant factor) is achieved when $k$ satisfies
\begin{equation}\label{eq:valuekmain}
    g(k) = \frac{S(F)k^{\l}}{\eps}\log(1/\beta)\ .
\end{equation}

Solving Equation~\eqref{eq:valuekmain} with the bound for $g(k)$ provided in Theorem~\ref{thm:bernstainapproxsmooth} yields
\[k = \max\left\{1,\left(\frac{\eps}{S(F)\log(1/\beta)}\right)^{\frac{1}{h+\l}}\right\}\]
and substituting the thus obtained value of $k$ into Equation~(\ref{eq:error4main}) yields the first statement of Theorem~\ref{thm:maintheorem}. Similarly, using the bound for $g(k)$ provided in Theorem~\ref{thm:bernstainapproxhoelder} we get the result for H\"older continuous functions. The bound for linear functions follows from the fact that the approximation error is zero for $h=1$ and $k=1$, since linear functions are fixed points of $B_{1}^{(1)}$. Finally, the analysis of the running time follows from observing that, for the optimal cover size $k$ we computed, $k^{\l}$ is always upper bounded by $\eps/(S(F)\log(1/\beta))$ and thus by $\mathrm{poly}(n)$.

\subsection{Discussion}\label{sec:discuss}

\paragraph*{Comparison to Baseline.} Algorithm~\ref{alg:bernstein_mechanism} is based on a relatively simple approach: it evaluates the target function on a lattice cover, adding Laplace noise for privacy. One might be tempted to approximate the input function by rounding a query point $\vect{y}$ to the nearest lattice point $\vect{p}$ and releasing the corresponding noisy evaluation $\widetilde{F_{\cD}}(\vect{p})$. Although it is straightforward to prove that, for $(\gamma,L)$-H\"older continuous functions, such a piecewise constant approximation achieves error $O(1/k^{\gamma})$, this upper bound is essentially tight, as it can be shown by considering the approximation error it achieves for linear functions. Therefore, this method has two main disadvantages: the output function is not even continuous (although we always consider continuous input functions) and for highly smooth input functions it cannot achieve the fast convergence rates of the Bernstein mechanism. In Section~\ref{sec:examples}, we offer further examples supporting this argument.

\paragraph*{$\boldsymbol{(\eps,\delta)}$-Differential Privacy.} We note that our analysis can be easily extended to the relaxed notion of {\em approximate differential privacy} using advanced composition theorems (see for example~\citenaked{DR2014}) instead of sequential composition~\cite{DMNS2006}. Specifically, it suffices to choose the perturbation scale $\lambda_{\delta}=2S(F)\sqrt{2(k+1)^{\l}\log(1/\delta)}/\eps$.

\begin{theorem}
    Let $0<\delta<1$. Under the same assumptions of Theorem~\ref{thm:maintheorem}, the Bernstein mechanism $\cM$ (with perturbation scale $\lambda_{\delta}$) provides $(\eps,\delta)$-differential privacy and is $(\alpha,\beta)$-accurate with error scaling as follows.
    \begin{enumerate}[(i)]
        \item
        If $F_{\cD}$ is $(2h,T)$-smooth for every $\cD\in\cX^{n}$,
        there exists $k=k(S(F),\eps,\delta,\beta,\l,h,T)$ such that $\alpha = O\left(\frac{S(F)}{\eps}\log(1/\beta)\sqrt{\log(1/\delta)}\right)^{\frac{2h}{\l + 2h}}$; and
        \item
        If $F_{\cD}$ is $(\gamma,L)$-H\"older continuous for every $\cD\in\cX^{n}$,
        there exists $k=k(S(F),\eps,\delta,\beta,\l,\gamma,L)$ such that $\alpha = O\left(\frac{S(F)}{\eps}\log(1/\beta)\sqrt{\log(1/\delta)}\right)^{\frac{\gamma}{\l + \gamma}}$.
    \end{enumerate}
\end{theorem}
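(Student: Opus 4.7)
The proof proceeds by mirroring the structure of Theorem~\ref{thm:maintheorem}, with two modifications: (a) privacy is established using advanced composition in place of sequential composition; and (b) the cover size $k$ is re-optimized against the enlarged Laplace scale $\lambda_\delta$.

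For privacy, observe that each of the $N=(k+1)^{\l}$ perturbed evaluations $\widetilde{F_\cD}(\vect{p})$ adds an independent $\Lap(\lambda_\delta)$ draw to a scalar-valued query of sensitivity at most $S(F)$. By Lemma~\ref{lem:sensitivity}, each such release is $\eps_0$-differentially private with $\eps_0 = S(F)/\lambda_\delta = \eps/(2\sqrt{2N\log(1/\delta)})$. Applying the advanced composition theorem of \citenoun{DR2014} to these $N$ independent mechanisms yields $(\eps',\delta)$-DP overall with $\eps' = \eps_0\sqrt{2N\log(1/\delta)} + N\eps_0(e^{\eps_0}-1)$: the first summand equals $\eps/2$ by construction, while the second is $O(N\eps_0^2) = O(\eps^2/\log(1/\delta))$, which is dominated by $\eps/2$ in the standard regime $\eps=O(1)$, $\delta\ll 1$. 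Hence $\cM$ satisfies $(\eps,\delta)$-differential privacy (with the factor-of-two slack built into $\lambda_\delta$ absorbing the lower-order term).

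For utility, I reuse the decomposition of Equations~\eqref{eq:error1main}--\eqref{eq:error2main}. The approximation-error summand $\max_{\vect{y}}|F_\cD(\vect{y})-B^{(h)}_{k}(F_\cD;\vect{y})|$ is identical to the pure-DP analysis: via Theorem~\ref{thm:bernstainapproxsmooth} (case (i)) or Theorem~\ref{thm:bernstainapproxhoelder} (case (ii)), together with the inductive $\l$-dimensional extension, it is bounded by $g(k) = O(k^{-h})$ or $O(k^{-\gamma/2})$ respectively. The perturbation-error summand is bounded by re-running Proposition~\ref{prop:affinelaplacemultivariate} with $\lambda$ replaced by $\lambda_\delta$: with probability at least $1-\beta$ it is at most $O(\lambda_\delta \log(1/\beta)) = O\bigl(S(F)\sqrt{(k+1)^\l\log(1/\delta)}\log(1/\beta)/\eps\bigr)$.

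Setting $q = S(F)\sqrt{\log(1/\delta)}\log(1/\beta)/\eps$, optimal $k$ balances $g(k) = q\cdot k^{\l/2}$. In case (i) this gives $k = \Theta(q^{-2/(\l+2h)})$ and $\alpha = O(q^{2h/(\l+2h)})$; in case (ii) it gives $k = \Theta(q^{-2/(\l+\gamma)})$ and $\alpha = O(q^{\gamma/(\l+\gamma)})$, matching the claimed rates. The main obstacle is the bookkeeping around advanced composition: one must verify that the lower-order term $N\eps_0(e^{\eps_0}-1)$ is absorbed into $\eps$ uniformly over the $k$ chosen by the optimization, and that the optimal $k$ indeed satisfies $(k+1)^\l = O(\mathrm{poly}(n))$ so that $\eps_0 \to 0$. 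Everything else is a routine rerun of the Theorem~\ref{thm:maintheorem} argument with the inflated noise scale $\lambda_\delta$.
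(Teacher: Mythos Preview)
Your proposal is correct and matches the paper's approach exactly: the paper itself offers only a one-sentence sketch, stating that it suffices to replace sequential composition by the advanced composition theorem of \citenoun{DR2014} and to set $\lambda_\delta = 2S(F)\sqrt{2(k+1)^{\l}\log(1/\delta)}/\eps$. You have faithfully filled in the details---the per-coordinate budget $\eps_0$, the absorption of the second-order term $N\eps_0(e^{\eps_0}-1)$ by the leading factor of $2$, and the re-balancing of $g(k)$ against the new noise term $q\,k^{\l/2}$---and arrived at the stated exponents $\tfrac{2h}{\l+2h}$ and $\tfrac{\gamma}{\l+\gamma}$.
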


Even though this relaxation allows for improved accuracy, in this work we explore a different point on the privacy-utility Pareto front and focus our attention on $\eps$-differential privacy, since there is generally a significant motivation for achieving stronger privacy guarantees.
Moreover, to the best of our knowledge, it is unknown whether previous solutions~\cite{HRW2013} even apply to this framework.

\section{Lower Bound}\label{sec:lowerbound}

In this section we present a lower bound on the error that any $\eps$-differentially private mechanism approximating a function $F\colon\cX^{n}\times\cY\ra\reals$ must introduce.

\begin{theorem}\label{thm:lowerbound}
    For $\eps>0$, there exists a function $F\colon\cX^{n}\times\cY\ra\reals$ such that the error that any $\eps$-differentially private mechanism approximating $F$ introduces is $\Omega\left(S(F)/\eps\right)$, with probability arbitrarily close to $1$.
\end{theorem}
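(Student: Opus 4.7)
The plan is to reduce the function-valued lower bound to the folklore scalar-query lower bound via a packing argument that combines group privacy (iterated $\eps$-differential privacy along a chain of neighbors) with two-point indistinguishability. The key observation is that when $F_{\cD}$ is constant in $\vect{y}$, the accuracy notion of Definition~\ref{def:accuracy} collapses to accuracy on a single scalar. It therefore suffices to exhibit a scalar-valued instance in which two databases at Hamming distance $k$ produce target values separated by exactly $k\cdot S(F)$, with $k\eps$ just shy of $\log((1-\beta)/\beta)$, and then derive a contradiction from the hypothesis of accuracy below $k\cdot S(F)/2$.

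\paragraph*{Construction and contradiction.}
I would take $\cX=\{0,1\}$, $\cY=[0,1]^{\l}$, and define $F(\cD,\vect{y})=S\cdot\sum_{i=1}^{n}x_i$ for an arbitrary scale $S>0$; then $S(F)=S$ tightly. For any failure probability $\beta\in(0,1/2)$, pick an integer $k$ just below $\eps^{-1}\log((1-\beta)/\beta)$, which is admissible provided $n\geq k$. Let $\cD$ be the all-zeros database and $\cD'$ differ from $\cD$ by flipping the first $k$ coordinates to $1$; then $F_{\cD}\equiv 0$, $F_{\cD'}\equiv k\cdot S(F)$ uniformly over $\cY$, and $\cD,\cD'$ sit at Hamming distance exactly $k$. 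Now suppose some $\eps$-differentially private $\cM$ were $(\alpha,\beta)$-accurate on this $F$ with $\alpha<k\cdot S(F)/2$. Set $A=\{f\colon\cY\ra\reals\ :\ \sup_{\vect{y}}|f(\vect{y})|\leq\alpha\}$ and $A'=\{f\ :\ \sup_{\vect{y}}|f(\vect{y})-k\cdot S(F)|\leq\alpha\}$; these are disjoint by the choice of $\alpha$. Accuracy yields $\Pr[\cM(\cD)\in A]\geq 1-\beta$ and $\Pr[\cM(\cD')\in A]\leq\beta$, while iterating $\eps$-differential privacy $k$ times along the neighbor path gives $\Pr[\cM(\cD)\in A]\leq e^{k\eps}\Pr[\cM(\cD')\in A]$. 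Combining, $1-\beta\leq e^{k\eps}\beta$, contradicting $k\eps<\log((1-\beta)/\beta)$. Hence $\alpha\geq k\cdot S(F)/2=\Omega(S(F)/\eps)$, with the hidden constant depending on $\beta$.

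\paragraph*{Main obstacle.}
Once the reduction to functions constant in $\vect{y}$ is in place, the argument is essentially the folklore two-point scalar lower bound and presents no deep technical hurdle. The delicate aspect is interpreting ``probability arbitrarily close to $1$'': driving $\beta\to 0$ inflates the constant in $\Omega$ by a factor of $\log(1/\beta)$, which is precisely the logarithmic gap between the linear-case upper bound of Theorem~\ref{thm:maintheorem} and this matching lower bound. A minor bookkeeping point is that the construction requires $n\gtrsim\log(1/\beta)/\eps$ so that the $k$-neighbor chain fits inside $\cX^n$; this is exactly the regime in which the lower bound is informative.
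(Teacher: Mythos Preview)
Your argument is correct and establishes the core lower bound, but it proceeds differently from the paper. The paper does not argue directly: it invokes a multi-database packing lemma of \citenoun{D2012} as a black box, instantiating it with eight databases ($s=3$) over a finite histogram alphabet at mutual $\ell_1$-distance $2\lfloor 1/\eps\rfloor$, together with a weighted-histogram target augmented by a data-independent linear term $\eta\langle\vect{y},\vect{1}\rangle$. Your route is more elementary and self-contained: a two-point packing plus explicit group privacy, with $F$ constant in $\vect{y}$. Nothing is lost by dropping the $\vect{y}$-dependence---in the paper's construction that term cancels in every pairwise difference and does not affect $S(F)$---so your simplification is genuine. What your approach buys is transparency (no external lemma, and the role of the Hamming radius $k\asymp 1/\eps$ is explicit); what the paper's approach buys is access to a many-point packing.

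One caveat concerns the clause ``with probability arbitrarily close to~$1$.'' Your contradiction shows that no mechanism is $(\alpha,\beta)$-accurate for $\alpha<kS(F)/2$, i.e.\ on some database the error exceeds $\alpha$ with probability \emph{greater than $\beta$}; a two-point packing cannot push the probability of the error event past $1/2$ on any single database. Driving that probability toward~$1$ genuinely requires a packing of growing cardinality, which is precisely what De's lemma supplies. Under your reading of the theorem---a lower bound on the achievable $\alpha$ at any confidence level $1-\beta$---your proof is complete; under the stronger reading (the error event itself occurs with probability arbitrarily close to~$1$) you would need to upgrade to a larger packing, which is routine but not in your write-up. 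You correctly flag this as the delicate point.
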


\begin{proof}
    In order to prove Theorem~\ref{thm:lowerbound}, we consider $\cX\subset[0,1]^{\l}$ to be a finite set and without loss of generality we view the database $\cD$ as an element of $\cX^{n}$ or as an element of $\mathbb{N}^{|\cX|}$, \ie a histogram over the elements of $\cX$, interchangeably. We can then make use of a general result provided by \citenoun{D2012}.

    \begin{proposition}[\citenaked{D2012}]\label{prop:delowerbound}
    Assume $\cD_1,\cD_2,\ldots,\cD_{2^s} \in\mathbb{N}^{N}$ such that, for every $i$, $\|\cD_i\|_1\leq n$ and, for $i\neq j$, $\|\cD_i-\cD_j\|_1 \leq \Delta$. Moreover, let $f\colon \mathbb{N}^{N}\ra \reals^{t}$ be such that for any $i\neq j$, $\|f(\cD_i)-f(\cD_j)\|_{\infty} \geq \eta$. If $\Delta \leq (s-1)/\eps$, then any mechanism which is $\eps$-differentially private for the query $f$ on databases of size $n$ introduces an error which is $\Omega(\eta)$, with probability arbitrarily close to $1$.
    \end{proposition}

    Therefore, we only need to show that there exists a suitable sequence of databases $\cD_1,\cD_2,\ldots,\cD_{2^s}$, a function $F\colon\cX^{n}\times\cY\ra\reals$ and a $\vect{y}\in\cY$ such that $F(\cdot,\vect{y})$ satisfies the assumptions of Proposition~\ref{prop:delowerbound}. We actually show that this holds for every $\vect{y}\in\cY$. Let $\eps>0$ and $V$ be a non-negative integer. We define $\cX=(\{0,1/(V+8),2/(V+8),\ldots,1\})^{\l}$. Note that $N=|\cX|=(V+9)^{\l}$. Let furthermore $c=\floor*{1/\eps}$ and $n=V+c$. The function $F\colon\cX^{n}\times[0,1]^{\l}\ra\reals$ we consider is defined as follows:
    \[F(\cD,\vect{y}) = \eta(d_0 + \ldots + d_{N-7} + 2d_{N-6} + \ldots + 8d_{N} + \langle \vect{y},\vect{1}\rangle),\]
    where $d_i$ corresponds to the number of entries in $\cD$ whose value is $x_i$, for every $x_i\in\cX$. For $s=3$, we consider the sequence of databases $\cD_1,\cD_2,\ldots,\cD_{8}$, where, for $j\in\{1,2,\ldots,8\}$, $d_i\in\cD_j$ is such that
    \[
        d_i =
        \begin{cases}
            1, & \text{ for } i\in\{0,1,\ldots,V-1\}\\
            c, & \text{ for } i=N-j+8 \\
            0, & \text{ otherwise}
        \end{cases}\ .
    \]
    We first observe that, for every $j\in\{1,2,\ldots,8\}$, $\|\cD_j\|_1 = n$. Moreover, for $i\neq j$, $\|\cD_i-\cD_j\|_1 = 2c \leq 2/\eps$. Finally, for $i\neq j$, $|F(\cD_i,\vect{y})-F(\cD_j,\vect{y})| \geq c\eta$ for every $\vect{y}\in[0,1]^{\l}$. Since $S(F)=7\eta$, Proposition~\ref{prop:delowerbound} implies that, with high probability, any $\eps$-differentially private mechanism approximating $F$ must introduce an error of order $\Omega\left(S(F)/\eps\right)$.
\end{proof}

\section{Examples}\label{sec:examples}

In this section, we demonstrate the versatility of the Bernstein mechanism through the analysis of a range of example learners.

\begin{figure}[t!]
    \centering
    \includegraphics[width=0.9\columnwidth]{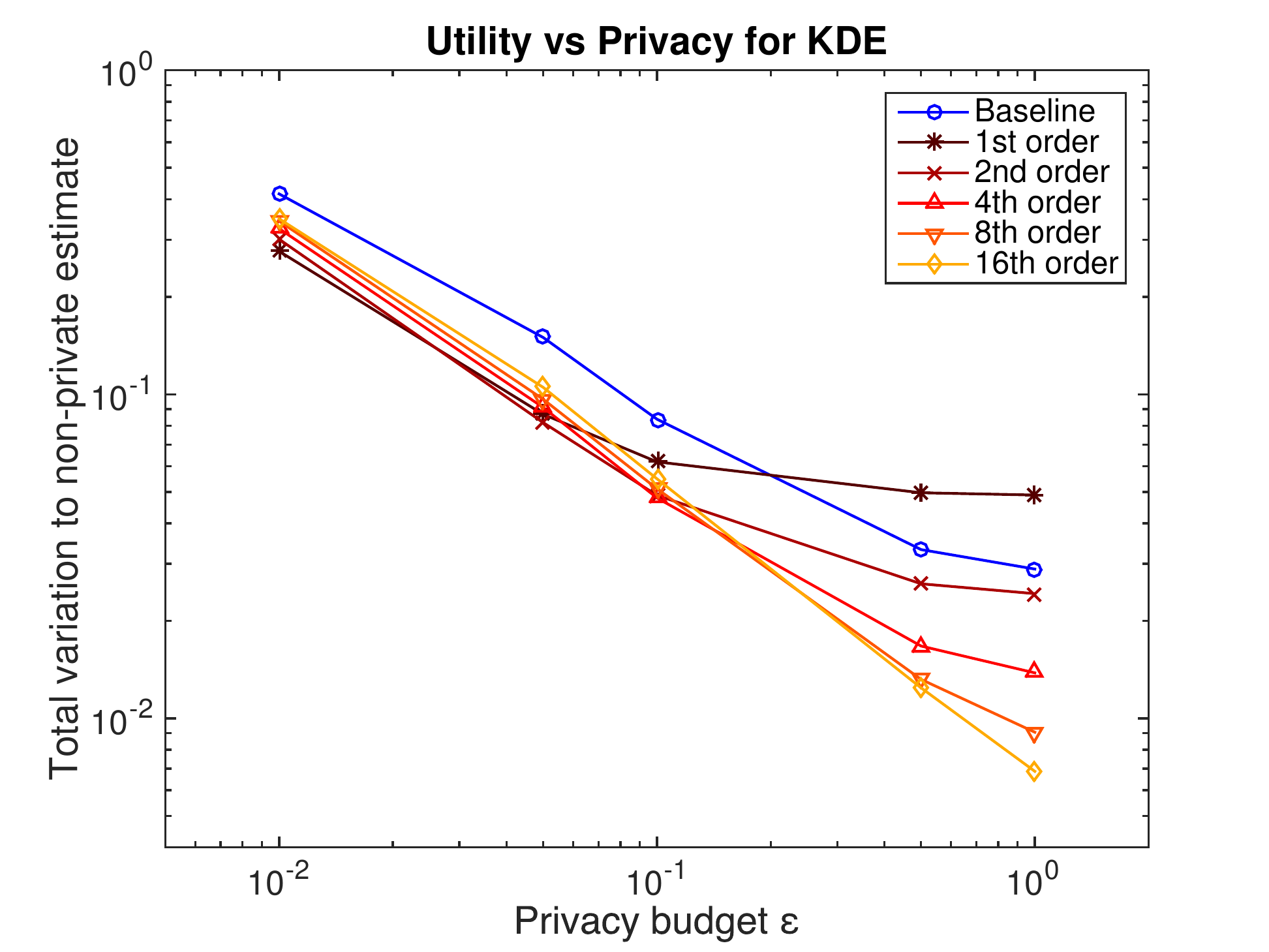}
    \caption{Private KDE with Gaussian kernel}
    \label{fig:kde}
\end{figure}

\paragraph*{Kernel Density Estimation.}

Let $\cX=\cY=[0,1]^{\l}$ and $\cD=(\vect{d}_1,\vect{d}_2,\ldots,\vect{d}_n)\in\cX^{n}$. For a given kernel $K_{H}$, with bandwidth $H$ (a symmetric and positive definite $\l\times\l$ matrix), the kernel density estimator $F_{H}\colon\cX^n\times\cY\ra\reals$ is defined as
$F_{H}(\cD,\vect{y}) = \frac{1}{n}\sum_{i=1}^{n}K_{H}(\vect{y}-\vect{d}_i)$.
It is easy to see that $S(F_{H}) \leq \sup_{\vect{y}\in[-1,1]^{\l}}K_{H}(\vect{y})/n$. For instance, if $K_{H}$ is the Gaussian kernel with covariance matrix $H$, then $S(F_{H}) \leq 1/(n\sqrt{(2\pi)^{\l}\det(H)})$. Moreover, observe that $F_{H}(\cD,\cdot)$ is an $(h,T)$-smooth function for any positive integer $h$. Hence the error introduced by the mechanism is
\[O\left(\frac{1}{n\eps\sqrt{\det(H)}}\log(1/\beta)\right)^{\frac{h}{\l + h}},\]
with probability at least $1-\beta$. In Figure~\ref{fig:kde} we display the utility (averaged over $1000$ repeats) of the Bernstein mechanism ($k=20$) on $5000$ points drawn from a mixture of two normal distributions $N(0.5,0.02)$ and $N(0.75,0.005)$ with weights $0.4,0.6$, respectively. We first observe that for every privacy budget $\eps$ there is a suitable choice of $h$ such that our mechanism always achieves better utility compared to the baseline (\cf Section~\ref{sec:discuss}). Moreover, accuracy improves for increasing $h$, except for sufficiently large perturbations (small $\eps$) which more significantly affect higher-order basis functions (larger $h$).
Private cross validation~\cite{chaudhuri2011differentially,chaudhuri2013stability} can be used to tune $h$. We conclude noting that the same error bounds can be provided by the mechanism of \citenoun{WFZW2013}, since the function $F_{H}(\cD,\cdot)$ is \emph{separable} in the training set $\cD$, \ie $F_{H}(\cD,\cdot) = \sum_{\vect{d}\in\cD}f_{H}(\vect{d},\cdot)$. However, this assumption is overly restrictive for many applications. In the following, we discuss how the Bernstein mechanism can be successfully applied to several such cases.

\paragraph*{Priestley-Chao Kernel Regression.}

For ease of exposition, consider $\l=1$. For constant $B>0$, let $\cX=[0,1]\times[-B,B]$ and $\cY=[0,1]$. Without loss of generality, consider datasets $\cD=((d_1,l_1),(d_2,l_2),\ldots,(d_n,l_n))\in\cX^{n}$, where $d_1\leq d_2\leq\ldots\leq d_n$, and for every $i\in\{1,\ldots,n\}$ there exists $j\neq i$ such that $|d_i-d_j|\leq c/n$, for a given (and publicly known) $0<c=o(n)$. Small values of $c$ restrict the data space under consideration, whereas $c=n$ would correspond to the general case $\cD\in\cX^{n}$. For kernel $K$ and bandwidth $b>0$, the Priestley-Chao kernel estimator~\cite{PC1972,B1977} is defined as
$F_{b}(\cD,y) = \frac{1}{b}\sum_{i=2}^{n}(d_i-d_{i-1})K\left((y-d_i)/b\right)l_i$.
This function is not separable in $\cD$ and
\[S(F_{b}) = \sup_{y\in \cY}S(F_{b}(\cdot,y))\leq \frac{4Bc}{nb} \sup_{y\in[-1,1]} K\left(\frac{y}{b}\right).\]
If $K$ is the Gaussian kernel, then with probability at least $1-\beta$ the error introduced by the mechanism can be bounded by
\[O\left(\frac{c}{n\eps b}\log(1/\beta)\right)^{\frac{h}{1 + h}}.\]

\begin{figure}[t!]
    \centering
    \includegraphics[width=0.9\columnwidth]{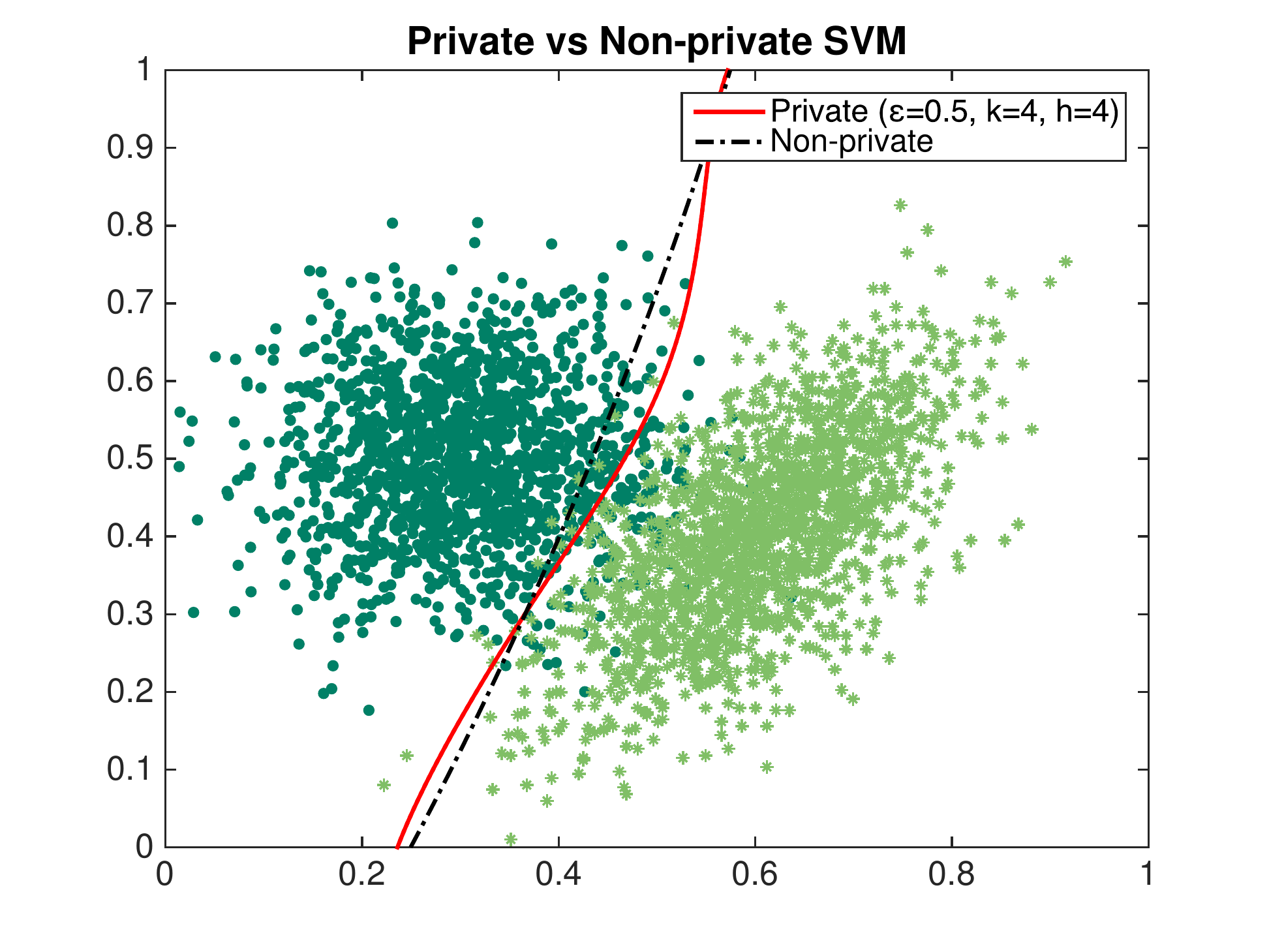}
    \caption{Private SVM with Gaussian kernel}
    \label{fig:svm}
\end{figure}

\paragraph*{Naive Bayes Classification.}

In this example we apply the Bernstein mechanism to a probabilistic learner. Without loss of generality, assume $X=[0,1]^{\l}$, $\cX=X\times\{l^{+},l^{-}\}$, $\cY=X$ and $\cD=((\vect{d}_1,l_1),(\vect{d}_2,l_2),\ldots,(\vect{d}_n,l_n))\in\cX^{n}$. A naive Bayes classifier can be interpreted as $F\colon \cX^{n}\times\cY\ra\reals$ such that $F_{\cD}(\vect{y})\propto\Pr(\vect{y}|l^{+},\cD)\Pr(l^{+}|\cD) - \Pr(\vect{y}|l^{-},\cD)\Pr(l^{-}|\cD)$. Predictions can then be made by assigning the instance $\vect{y}$ to the class $l^{+}$ (resp. $l^{-}$) if $F_{\cD}(\vect{y})\geq 0$ (resp. $F_{\cD}(\vect{y})< 0$). Since, for a class $l$, $\Pr(\vect{y}|l,\cD)\propto\prod_{i=1}^{\l}\Pr(y_i|l,\cD)$, it is easy to show that $F_{\cD}(\cdot)$ is an $(h,T)$-smooth function whenever each likelihood is estimated using a Gaussian distribution or KDE~\cite{JL1995} (with a sufficiently smooth kernel). In the latter case, using a Gaussian kernel, the sensitivity of $F$ can be bounded by $S(F)\leq 2(1/n + (2^{\l}-1)/(n\sqrt{2\pi}b))$, where $b$ is the chosen bandwidth\iffullelse{. The detailed computation is provided in Appendix~\ref{sec:naivebayes}.}{ (\cf the full report~\citenaked{AR2016}).} The error introduced by the Bernstein mechanism is thus bounded by
\[O\left(\frac{1}{n\eps b}\log(1/\beta)\right)^{\frac{h}{\l+h}}\ ,\]
with probability at least $1-\beta$.

\paragraph*{Regularized Empirical Risk Minimization.}

In the next examples, the functions we aim to release are implicitly defined by an algorithm.
Let $X=[0,1]^{\l}$, $\cX=X\times[0,1]$ and $\cY=X$.
Let $L$ be a convex and locally $M$-Lipschitz (in the first argument) loss function. For $\cD=((\vect{d}_1,l_1),(\vect{d}_2,l_2),\ldots,(\vect{d}_n,l_n))\in\cX^{n}$, a regularized empirical risk minimization program with loss function $L$ is defined as
\begin{equation}\label{eq:rERM}
    \vect{w}^{\str}\in\argmin_{\vect{w}\in\reals^{r}} \frac{C}{n}\sum_{i=1}^{n}L(l_i,f_{\vect{w}}(\vect{d}_i)) + \frac{1}{2}\|\vect{w}\|_2^2\ ,
\end{equation}
where $f_{\vect{w}}(\vect{x}) = \langle \phi(\vect{x}),\vect{w}\rangle$ for a chosen feature mapping $\phi\colon X\ra\reals^{r}$ taking points from $X$ to some (possibly infinite) $r$-dimensional feature space and a hyperplane normal $\vect{w}\in\reals^{r}$. Let $K(\vect{x},\vect{y}) = \langle\phi(\vect{x}), \phi(\vect{y})\rangle$ be the kernel function induced by the feature mapping $\phi$. The Representer Theorem~\cite{KW1971} implies that the minimizer $\vect{w}^{\str}$ lies in the span of the functions $K(\cdot,\vect{d}_i)\in\cH$, where $\cH$ is a reproducing kernel Hilbert space (RKHS). Therefore, we consider $F\colon \cX^{n}\times\cY\ra\reals$ such that $F_{\cD}(\vect{y}) = f_{\vect{w}^{\str}}(\vect{y}) = \sum_{i=1}^{n}\alpha_i l_i K(\vect{y},\vect{d}_i)$, for some $\alpha_i\in\reals$. An upper bound on the sensitivity of this function follows from an argument provided by~\citenoun{HRW2013} based on a technique of~\citenoun{BE2002}. In particular, we have
\[S(F) = \sup_{\vect{y}\in \cY,\vect{w}\sim \vect{w}'}\left|f_{\vect{w}}(\vect{y}) - f_{\vect{w}'}(\vect{y})\right| \leq \frac{MC}{n}\sup_{\vect{y}\in\cY}K(\vect{y},\vect{y}).\]
If $K$ is $(2h,T)$-smooth, the error introduced is bounded, with probability at least $1-\beta$, by
\[O\left(\frac{MC\sup_{\vect{y}\in\cY}K(\vect{y},\vect{y})}{n\eps}\log(1/\beta)\right)^{\frac{h}{\l + h}},\]
Note that this result holds with very mild assumptions, namely for any convex and locally $M$-Lipschitz loss function (\eg square-loss, log-loss, hinge-loss) and any bounded kernel $K$. Figure~\ref{fig:svm} depicts SVM learning with RBF kernel ($C=\sigma=1$) on 1500 each of positive (negative) Gaussian data with mean $[0.3,0.5]$ ($[0.6,0.4]$) and covariance $[0.01,0;0,0.01]$ ($0.01*[1,0.8;0.8,1.5]$) and demonstrates the mechanism's uniform approximation of predictions, best seen geometrically with the classifier's decision boundary.

\paragraph*{Logistic Regression.}

Let now $X=\{\vect{x}\in[0,1]^{\l}\colon \|\vect{x}\|_2\leq 1\}$. Let furthermore $\cX=X\times[0,1]$ and $\cY=[0,1]^{\l}$. The logistic regressor can be seen as a function $F\colon \cX^{n}\times\cY\ra\reals$ such that $F_{\cD}(\vect{y}) = \langle \vect{w}^{\str},\vect{y}\rangle$, where $\vect{w}^{\str}$ is the minimizer of~\eqref{eq:rERM} when $\phi$ is the identity mapping and the loss function is $L(l,\langle\vect{w},\vect{d}\rangle)=\log\left(1+e^{-l\langle \vect{w},\vect{d}\rangle}\right)$. It is then possible to show that the error introduced by the Bernstein mechanism is bounded, with probability at least $1-\beta$, by
\[O\left(\frac{C}{n\eps}\log(1/\beta)\right),\]
since $F_{\cD}(\vect{y})$ is a linear function. The prediction with the sigmoid function achieves the same error bound, since it is $1/4$-Lipschitz\iffullelse{. A more detailed analysis is provided in Appendix~\ref{sec:logisticregression}.}{ (\cf the full report~\citenaked{AR2016}).}

\section{Conclusions}\label{sec:conclusions}

In this paper we have considered the release of functions of test data and privacy-sensitive training data. We have presented a simple yet effective mechanism for this general setting, that makes use of iterated Bernstein polynomials to approximate any regular function with perturbations applied to the resulting coefficients. Both $\eps$-differential privacy and utility rates are proved in general for the mechanism, with corresponding lower bounds provided. A number of example learners are analyzed, demonstrating the Bernstein mechanism's versatility.

\subsubsection*{Acknowledgments.}

This work was partially completed while F.~Ald\`a was visiting the University of Melbourne. Moreover, he acknowledges support of the DFG Research Training Group GRK 1817/1. The work of B.~Rubinstein was supported by the Australian Research Council (DE160100584).

\bibliographystyle{aaai}
\bibliography{full_aaai.bbl}

\appendix

\section{Fixed Points of the Bernstein Operator}\label{sec:fixedpoints}

Although this is a classical result, we show for completeness that linear functions are fixed points of the Bernstein operator $B_{k}=B_{k}^{(1)}$, for $k\geq 1$. Let $f(y) = my + q$, for $m,q\in\reals$ and $y\in[0,1]$. We have
\begin{align*}
    B_{k}(f;y) &= \sum_{\nu=0}^{k}f\left(\frac{\nu}{k}\right)b_{\nu,k}(y)\\
    &= \frac{m}{k}\sum_{\nu=0}^{k}\nu b_{\nu,k}(y) + q\sum_{\nu=0}^{k}b_{\nu,k}(y)\\
    &= my + q\ ,
\end{align*}
since $\sum_{\nu=0}^{k}b_{\nu,k}(y)=1$ and $\sum_{\nu=0}^{k}\nu b_{\nu,k}(y)=ky$.

\section{Proof of Theorem~\ref{thm:maintheorem} for $\boldsymbol{\l = 1}$}\label{sec:onedimproof}

Let us fix $k$, a positive integer. As described in Algorithm~\ref{alg:bernstein_mechanism}, the Bernstein mechanism perturbs the evaluation of the function $F_{\cD}$ on a cover of the interval $[0,1]$.

\begin{lemma}\label{lem:dp}
    Let $\eps>0$. Then the Bernstein mechanism $\cM$ provides $\eps$-differential privacy.
\end{lemma}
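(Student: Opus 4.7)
The plan is to reduce privacy of the Bernstein mechanism to a direct application of the Laplace mechanism (Lemma~\ref{lem:sensitivity}) on a finite-dimensional evaluation vector, followed by post-processing. For $\l=1$, the sanitization step only accesses the database $\cD$ through the vector of function values on the cover $P=\{0,1/k,\ldots,1\}$, so it suffices to show that this $(k+1)$-dimensional statistic is released with calibrated Laplace noise.

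First I would introduce the auxiliary vector-valued function $G\colon \cX^{n}\to\reals^{k+1}$ defined by $G(\cD)=\bigl(F_{\cD}(0),F_{\cD}(1/k),\ldots,F_{\cD}(1)\bigr)$. For any neighbouring databases $\cD\sim\cD'$, I would bound its $\ell_1$-sensitivity by expanding the norm coordinatewise and invoking the definition of $S(F)$ from Definition~\ref{def:sensitivity}:
\begin{align*}
\|G(\cD)-G(\cD')\|_{1} &= \sum_{\nu=0}^{k}\bigl|F(\cD,\nu/k)-F(\cD',\nu/k)\bigr| \\
&\leq \sum_{\nu=0}^{k} S\bigl(F(\cdot,\nu/k)\bigr) \leq (k+1)\,S(F).
\end{align*}
Hence $S(G)\leq (k+1)S(F)$.

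Next, since the sanitization step of Algorithm~\ref{alg:bernstein_mechanism} outputs $G(\cD)+\vect{Z}$ with $\vect{Z}\sim\Lap(\lambda)^{k+1}$ and $\lambda=S(F)(k+1)/\eps\geq S(G)/\eps$, Lemma~\ref{lem:sensitivity} immediately yields that the release of the noisy evaluations $\{\widetilde{F_{\cD}}(\vect{p})\mid \vect{p}\in P\}$ is $\eps$-differentially private. The evaluation step described in Algorithm~\ref{alg:bernstein_mechanism} is a deterministic function of these released coefficients (it multiplies them by data-independent basis values $b^{(h)}_{\nu,k}(y)$), so by the post-processing invariance of differential privacy the full mechanism $\cM$ remains $\eps$-differentially private.

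There is no substantive obstacle in this argument; the only point requiring care is correctly linking the sensitivity of the vector $G$ to the sensitivity $S(F)$ of the target function as defined in Definition~\ref{def:sensitivity}, and ensuring that the noise scale $\lambda$ chosen by the algorithm is at least $S(G)/\eps$. Both are essentially bookkeeping, which is why the lemma is stated separately as a warm-up to the multivariate privacy analysis obtained by replacing $k+1$ with $(k+1)^{\l}$.
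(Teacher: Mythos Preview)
Your proposal is correct and essentially identical to the paper's proof: the paper defines the same evaluation vector (called $\psi$ there), bounds its $\ell_1$-sensitivity by $(k+1)S(F)$ in the same coordinatewise way, and invokes Lemma~\ref{lem:sensitivity}. Your explicit mention of post-processing for the evaluation step is a small addition the paper leaves implicit, but otherwise the arguments coincide.
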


\begin{proof}
Let $\cD'\in\cX^{n}$ be a second database differing from $\cD$ in one entry only. Let furthermore $\psi\colon \cX^{n}\ra \reals^{k+1}$ be the map defined by
\[\psi(\cD) = \left(F_{\cD}\left(\frac{0}{k}\right),F_{\cD}\left(\frac{1}{k}\right),\ldots,F_{\cD}\left(\frac{k}{k}\right)\right).\]
Then
\begin{align*}
    S(\psi)&=\sup_{\cD\sim\cD'}\|\psi(\cD)-\psi(\cD')\|_1\\
    &\leq \sum_{\nu=0}^{k}\sup_{\cD\sim\cD'}\left|F_{\cD}\left(\frac{\nu}{k}\right)-F_{\cD'}\left(\frac{\nu}{k}\right)\right|\\
    &\leq S(F)(k+1)\ .
\end{align*}
According to Lemma~\ref{lem:sensitivity} (applied with $k+1$ in place of $d$), the mechanism $\cM$ provides $\eps$-differential privacy.
\end{proof}

In order to analyze the accuracy of our mechanism, we denote by
$\widetilde{B^{(h)}_{k}}(F_{\cD};y) = \sum_{\nu=0}^{k}\left[F_{\cD}\left(\nu/k\right)+Z_{\nu}\right]b^{(h)}_{\nu,k}(y)$
the iterated Bernstein polynomial of order $h$ constructed using the coefficients output by the mechanism $\cM$. The error $\alpha$ introduced by the mechanism can be expressed as follows:
\begin{align}
    \alpha &= \max_{y\in[0,1]}\left|F_{\cD}(y) - \widetilde{B^{(h)}_{k}}(F_{\cD};y)\right| \label{eq:error1} \\
    &\leq \max_{y\in[0,1]}\left|\widetilde{B^{(h)}_{k}}(F_{\cD};y) - B^{(h)}_{k}(F_{\cD};y)\right|\label{eq:error2}\\
    \ &\mathrel{\phantom{\leq}}   
    \mathrel + \max_{y\in[0,1]}\left|F_{\cD}(y) - B^{(h)}_{k}(F_{\cD};y)\right|\nonumber.
\end{align}

For every $y\in [0,1]$, the first summand in Equation~\eqref{eq:error2} consists of the absolute value of an affine combination of independent Laplace-distributed random variables. 
\begin{proposition}\label{prop:affinelaplace}
    Let $Z_{0},\ldots,Z_{k}\stackrel{i.i.d.}{\sim} \Lap(\lambda)$, $\tau\geq 0$, and $C_h$ be a constant depending on $h$ only. Then:
    \[
    \Pr\left[\max_{y\in[0,1]}\left|\sum_{\nu=0}^{k} Z_{\nu}b^{(h)}_{\nu,k}(y)\right| \geq \tau\right] \leq
    e^{-\tau/(C_{h}\lambda)}\ .
    \]
\end{proposition}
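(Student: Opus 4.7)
The plan is to decompose the iterated Bernstein basis as a short alternating sum of standard Bernstein polynomials, then exploit Proschan's (1965) inequality on convex combinations of i.i.d.\ log-concave symmetric random variables (the Laplace density $\propto e^{-|y|/\lambda}$ is both log-concave and symmetric). Concretely, from Definition~\ref{def:iteretatedbernsteinpoly},
\[
\sum_{\nu=0}^k Z_\nu\, b^{(h)}_{\nu,k}(y) \;=\; \sum_{i=1}^h \binom{h}{i}(-1)^{i-1}\, T_i(y), \qquad T_i(y) := \sum_{\nu=0}^k Z_\nu\, B_k^{i-1}(b_{\nu,k}; y).
\]
The structural observation driving the proof is that for each $i$ and each $y \in [0,1]$, the weights $\{B_k^{i-1}(b_{\nu,k}; y)\}_\nu$ form a probability vector in $\nu$: non-negativity holds because $B_k$ preserves non-negativity (Proposition~\ref{prop:bernsteinproperties}) and $b_{\nu,k}\geq 0$; the normalization $\sum_\nu B_k^{i-1}(b_{\nu,k}; \cdot) = B_k^{i-1}(1; \cdot) = 1$ holds because $\sum_\nu b_{\nu,k} \equiv 1$ and $B_k$ fixes constants.

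Consequently, $T_i(y)$ is a convex combination of $Z_0,\ldots,Z_k$ for every $y$, which gives the pointwise bound $|T_i(y)| \leq \max_\nu |Z_\nu|$. Taking the sup over $y$ and applying the triangle inequality on the alternating-sum decomposition,
\[
\max_{y \in [0,1]} \left|\sum_{\nu=0}^k Z_\nu\, b^{(h)}_{\nu,k}(y)\right| \;\leq\; \Bigl(\sum_{i=1}^h \binom{h}{i}\Bigr) \max_\nu |Z_\nu| \;=\; (2^h - 1)\max_\nu |Z_\nu|.
\]
A union bound on $k+1$ i.i.d.\ Laplaces yields $\Pr[\max_\nu |Z_\nu| \geq t] \leq (k+1)e^{-t/\lambda}$, so the quantity on the left of the proposition's display is at most $(k+1)\exp(-\tau/((2^h-1)\lambda))$. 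In the regime $\tau \gtrsim \lambda \log(k+1)$, which is the range of interest for the utility analysis of Theorem~\ref{thm:maintheorem} (where $\tau \asymp \lambda k^{\ell}\log(1/\beta)$), the $(k+1)$ prefactor is absorbed into the exponential at the cost of doubling the constant, giving the stated bound with $C_h = 2(2^h - 1)$.

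The step I anticipate as the main obstacle is removing this $(k+1)$ prefactor cleanly, so that the bound holds for every $\tau \geq 0$ as literally stated. A sharper route is to invoke Proschan's inequality directly at fixed $y$ using Laplace symmetry: since $Z_\nu \stackrel{d}{=} -Z_\nu$ and the $Z_\nu$ are independent, $\sum_\nu w_\nu Z_\nu$ equals in distribution $\sum_\nu |w_\nu| Z_\nu$ for deterministic weights $w_\nu = b^{(h)}_{\nu,k}(y)$, and the $k$-independent Lebesgue estimate $\sum_\nu |b^{(h)}_{\nu,k}(y)| \leq 2^h - 1$ then yields $\Pr[|U(y)| \geq t] \leq \exp(-t/((2^h-1)\lambda))$ pointwise in $y$. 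Lifting this pointwise bound to the supremum requires a chaining or Markov--Bernstein argument exploiting the fact that $U(y) := \sum_\nu Z_\nu b^{(h)}_{\nu,k}(y)$ is a polynomial of degree $k$ in $y$; this is where the bulk of the remaining technical work would lie, although it is not needed for the utility rates claimed in Theorem~\ref{thm:maintheorem}.
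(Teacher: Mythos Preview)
Your final paragraph is essentially the paper's argument, carried out verbatim: replace $b^{(h)}_{\nu,k}(y)$ by $|b^{(h)}_{\nu,k}(y)|$ using the symmetry of the Laplace law, bound $U(y):=\sum_{\nu}|b^{(h)}_{\nu,k}(y)|\le 2^{h}-1$ via the same decomposition into the nonnegative terms $B_k^{i-1}(b_{\nu,k};\cdot)$ that you use, and apply Proschan's inequality to the convex weights $|b^{(h)}_{\nu,k}(y)|/U(y)$ to obtain $\Pr[|V(y)|\ge\tau]\le e^{-\tau/((2^{h}-1)\lambda)}$ for each fixed $y$. The paper then writes ``for every $y\in[0,1]$, concluding the proof'' and stops: no chaining, no Markov--Bernstein step, nothing to pass from the pointwise tail to the supremum. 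So the obstacle you single out is real and is simply not addressed in the paper's proof.

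Your first route is genuinely different and more elementary: the bound $|T_i(y)|\le\max_\nu|Z_\nu|$ needs only that a convex combination is dominated by the maximum (no Proschan), and because the right-hand side is independent of $y$ it gives a \emph{uniform} bound directly, at the price of the $(k{+}1)$ union-bound prefactor. That prefactor is not an artifact: already for $h=1$ one has $V(0)=Z_0$ and $V(1)=Z_k$, so $\Pr[\sup_y|V(y)|\ge\tau]\ge 1-(1-e^{-\tau/\lambda})^2$, which exceeds $e^{-\tau/(C\lambda)}$ for every fixed $C$ once $\tau>0$ is small; thus the proposition as literally stated (a $k$-free tail for all $\tau\ge 0$) is too strong, and your prefactored version is the honest statement. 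One small correction to your absorption remark: in the utility analysis the relevant scale is $\tau\asymp\lambda\log(1/\beta)$, not $\lambda k^{\l}\log(1/\beta)$, since $\lambda$ already carries the factor $(k{+}1)^{\l}$; the $(k{+}1)$ prefactor therefore adds an additive $\log k$ to the noise term rather than disappearing outright, which affects the rates of Theorem~\ref{thm:maintheorem} only by a logarithmic factor.
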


We provide the proof of Proposition~\ref{prop:affinelaplace} in Appendix~\ref{sec:affinelaplace}.
Proposition~\ref{prop:affinelaplace} implies that with probability at least $1-\beta$ the first summand in Equation~\eqref{eq:error2} is bounded by $O\left(S(F)k\log(1/\beta)/\eps\right)$.
According to the regularity of $F_{\cD}$, the second summand in Equation~\eqref{eq:error2} can then be bounded by a decreasing function $g(k)$. All in all, the error in Equation~\eqref{eq:error1} can be bounded as follows:
\begin{equation}\label{eq:error3}
    \alpha= O\left(g(k) + \frac{S(F)k}{\eps}\log(1/\beta)\right).
\end{equation}
Since the second summand in Equation~\eqref{eq:error3} is an increasing function in $k$, the optimal value for $k$ (up to a constant factor) is achieved when $k$ satisfies
\begin{equation}\label{eq:valuek}
    g(k) = \frac{S(F)k}{\eps}\log(1/\beta)\ .
\end{equation}
Solving Equation~\eqref{eq:valuek} with the bounds for $g(k)$ provided in Theorems~\ref{thm:bernstainapproxsmooth} and~\ref{thm:bernstainapproxhoelder} and substituting the thus obtained value of $k$ into~(\ref{eq:error3}) prove the first two statements. The bound when $F_{\cD}$ is linear follows from the fact that for $h=1$ and $k=1$ the approximation error in Equation~\eqref{eq:error2} is zero. The error is thus bounded by $O\left(S(F)\log(1/\beta)/\eps\right)$. The running time of the mechanism and the running time for answering a query is linear in $k$ and hence upper bounded by a polynomial in $n$ and $1/\eps$, if $1/S(F)\leq\mathrm{poly}(n)$.

\section{Proof of Proposition~\ref{prop:affinelaplace}}\label{sec:affinelaplace}

In order to prove the proposition, we make use of the following result.
\begin{theorem}[\citenaked{P1965}]\label{thm:logconcave}
    Suppose that $f\colon\reals\ra[0,1]$ is a log-concave density function such that $f(y)=f(-y)$ for every $y\in\reals$. Let $Z_1,\ldots,Z_m$ be i.i.d. random variables with density $f$, and suppose that $(a_1,\ldots,a_m),(b_1,\ldots,b_m)\in[0,1]^{m}$ satisfy
    \begin{enumerate}[(i)]
        \item $a_1\geq a_2\geq\ldots\geq a_m$, $b_1\geq b_2\geq\ldots\geq b_m$;
        \item $\sum_{i=1}^{k}b_i\leq\sum_{i=1}^{k}a_i$ for $k=1,\ldots,m-1$;
        \item $\sum_{i=1}^{m}a_i=\sum_{i=1}^{m}b_i=1$.
    \end{enumerate}
    Then, for all $\tau\geq0$
    \[\Pr\left[\left|\sum_{i=1}^{m}b_iZ_i\right|\geq \tau\right] < \Pr\left[\left|\sum_{i=1}^{m}a_iZ_i\right|\geq \tau\right].\]
\end{theorem}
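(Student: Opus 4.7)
The hypotheses on $(a_i), (b_i)$ state precisely that $a$ majorizes $b$ in the Hardy--Littlewood--Polya sense. My plan is threefold: (1) reduce via T-transforms to a two-coordinate comparison; (2) establish the two-coordinate inequality by exploiting symmetry and log-concavity of $f$; (3) extend to all coordinates via peakedness preservation under convolution with a symmetric unimodal random variable.

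\textbf{Step 1 (T-transform reduction).} By the Hardy--Littlewood--Polya theorem, $b$ can be written as a finite composition of T-transforms applied to $a$, where each T-transform replaces two coordinates $(a_i, a_j)$ with $a_i > a_j$ by $((1-\theta)a_i + \theta a_j,\, \theta a_i + (1-\theta) a_j)$ for some $\theta \in (0, 1/2]$, fixing the remaining coordinates. Applying the target inequality iteratively along this sequence, it suffices to treat the case where $b$ differs from $a$ at only two coordinates, WLOG coordinates $1$ and $2$.

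\textbf{Step 2 (Two-variable peakedness).} Set $X = a_1 Z_1 + a_2 Z_2$ and $Y = b_1 Z_1 + b_2 Z_2$; both are symmetric (inherited from $f$) and log-concave (by Prékopa's theorem, since scaling and convolution preserve log-concavity). The claim is that $Y$ is more \emph{peaked} than $X$, i.e., $\Pr[|X| \geq t] \geq \Pr[|Y| \geq t]$ for every $t \geq 0$. Normalize $a_1 + a_2 = 1$ and parameterize $a_1 = \tfrac{1}{2} + s,\, a_2 = \tfrac{1}{2} - s$ with $s \in [0, 1/2]$, and similarly $b_1 = \tfrac{1}{2} + r$ with $0 \leq r < s$. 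Defining
\[
\phi(u) = \Pr\!\left[\bigl|(\tfrac{1}{2}+u) Z_1 + (\tfrac{1}{2}-u) Z_2\bigr| \geq t\right],
\]
it suffices to prove $\phi'(u) \geq 0$ on $[0, 1/2]$. Writing the density $p_u$ of $(\tfrac{1}{2}+u)Z_1 + (\tfrac{1}{2}-u)Z_2$ as an explicit convolution of rescaled copies of $f$, differentiating under the integral, and integrating by parts, $\phi'(u)$ reduces to a one-dimensional integral whose integrand has a definite sign by the joint consequences of (a) $f$ even, hence $f'$ odd, and (b) $\log f$ concave, hence $f'/f$ monotone non-increasing on $(0, \infty)$.

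\textbf{Step 3 (Peakedness preservation under convolution).} Let $C = \sum_{k \geq 3} a_k Z_k$, which is symmetric log-concave (hence symmetric unimodal) by Prékopa, and independent of $X, Y$. For any fixed $t \geq 0$ set $q(x) = \Pr[|x + C| \leq t]$; by symmetry and unimodality of $C$, $q$ is symmetric and non-increasing in $|x|$. The layer-cake representation $q(x) = q(\infty) + \int_0^\infty \mathbb{1}[|x| \leq r]\, d\nu(r)$ for a non-negative measure $\nu$ on $(0, \infty)$, combined with Fubini, yields for each $W \in \{X, Y\}$
\[
\Pr[|W + C| \leq t] = q(\infty) + \int_0^\infty \Pr[|W| \leq r]\, d\nu(r).
\]
Since Step 2 gives $\Pr[|Y| \leq r] \geq \Pr[|X| \leq r]$ for all $r$ and $\nu \geq 0$, we conclude $\Pr[|Y + C| \leq t] \geq \Pr[|X + C| \leq t]$, equivalently $\Pr[|\sum b_k Z_k| \geq t] \leq \Pr[|\sum a_k Z_k| \geq t]$, as required.

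\textbf{Main obstacle.} The technical core is Step 2: quantifying how spreading two coefficients redistributes the mass of $p_u$ between the interior $(-t, t)$ and the tails. Symmetry of $f$ alone does not suffice (it kills odd-order cross terms but leaves the sign of the remainder ambiguous), and log-concavity alone does not suffice (it controls monotonicity of $f'/f$ but not centering); the two must be used simultaneously. Strictness of the final inequality follows from $f$ being a genuine density, positive on an interval around zero, together with the fact that if $a \neq b$ as sorted vectors, at least one T-transform in the decomposition is non-trivial ($\theta > 0$).
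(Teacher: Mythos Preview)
The paper does not prove this theorem at all: it is quoted verbatim as a result of \citenoun{P1965} and used as a black box in the proof of Proposition~\ref{prop:affinelaplace}. There is therefore no ``paper's own proof'' to compare against.

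That said, your outline is essentially Proschan's original strategy. The reduction via majorization and T-transforms (Step~1), the two-coordinate peakedness comparison (Step~2), and the preservation of peakedness under convolution with an independent symmetric unimodal variable (Step~3, which is Birnbaum's lemma) together form the standard route. One caution on Step~2: you assert that after differentiating under the integral and integrating by parts the sign of $\phi'(u)$ is forced by evenness of $f$ together with monotonicity of $f'/f$, but you do not actually exhibit the resulting integrand or verify its sign. This is precisely the place where the argument can go wrong if done carelessly, and Proschan handles it by a direct comparison rather than differentiation; if you pursue the derivative route you should carry it through explicitly. The strictness claim at the end also needs a little more care: you need $f$ strictly log-concave on a neighborhood of the origin (or an equivalent non-degeneracy condition) to rule out equality, and the theorem as stated only assumes log-concavity, so the strict inequality as written is slightly stronger than what your argument delivers without an extra hypothesis.
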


\noindent
Choosing $a_1=1$ and $a_j=0$ for $j=2,\ldots,m$, Theorem~\ref{thm:logconcave} implies
\begin{equation}\label{eq:last}
    \Pr\left[\left|\sum_{i=1}^{m}b_iZ_i\right|\geq \tau\right] < \Pr\left[\left|Z_1\right|\geq \tau\right]
\end{equation}
for every $(b_1,\ldots,b_m)\in[0,1]^{m}$ which satisfies $\sum_{i=1}^{m}b_i=1$. We then observe that the density function $h(y) = \exp(-|y|/\lambda)/(2\lambda)$ of the Laplace distribution is symmetric and log-concave. If $Z_i\sim\Lap(\lambda)$ are i.i.d. random variables for $i=1,\ldots,m$, the right-hand side of Equation~\eqref{eq:last} satisfies
\begin{equation}\label{eq:up_bound_laplace}
    \Pr\left[\left|Z_1\right|\geq \tau\right]=\exp\left(-\frac{\tau}{\lambda}\right).
\end{equation}

Although the bases $b^{(h)}_{\nu,k}$ are not always positive for $h\geq 2$, we observe that, for $y\in[0,1]$, $V(y) = \sum_{\nu=0}^{k} Z_{\nu}b^{(h)}_{\nu,k}(y)$ and $V'(y) = \sum_{\nu=0}^{k} Z_{\nu}|b^{(h)}_{\nu,k}(y)|$ have the same distribution, since the random variables $Z_{\nu}$ are i.i.d. and symmetric around zero. We can thus restrict our analysis to $V'(y)$. For $y\in[0,1]$, let $U(y)=\sum_{\nu=0}^{k}|b^{(h)}_{\nu,k}(y)|$. We first note that
\begin{align}
    U(y) &= \sum_{\nu=0}^{k}\left|\sum_{i=1}^{h}{h \choose i}(-1)^{i-1}B^{i-1}_{k}(b_{\nu,k};y)\right|\nonumber\\
    &\leq \sum_{\nu=0}^{k}\sum_{i=1}^{h}{h \choose i}\left| B^{i-1}_{k}(b_{\nu,k};y)\right|\nonumber\\
    &= \sum_{\nu=0}^{k}\sum_{i=1}^{h}{h \choose i} B^{i-1}_{k}(b_{\nu,k};y)\nonumber\\
    &= \sum_{i=1}^{h}{h \choose i} \sum_{\nu=0}^{k} B^{i-1}_{k}(b_{\nu,k};y)\nonumber\\
    &= \sum_{i=1}^{h}{h \choose i} B^{i-1}_{k}\left(\sum_{\nu=0}^{k}b_{\nu,k};y\right)\nonumber\\
    &= \sum_{i=1}^{h}{h \choose i}\nonumber\\
    &= 2^{h} - 1\ .\label{eq:boundonthesum}
\end{align}

According to Equations~\eqref{eq:last} and~\eqref{eq:up_bound_laplace}, for every $y\in[0,1]$ and $\tau'\geq 0$ we have
\[
\Pr\left[\left|\frac{1}{U(y)}\sum_{\nu=0}^{k} Z_{\nu}|b^{(h)}_{\nu,k}(y)|\right| \geq \tau'\right] \leq
\exp\left(-\frac{\tau'}{\lambda}\right)\ .
\]

\noindent
Choosing $\tau=U(y)\tau'$, we get
\begin{align*}
\Pr\left[\left|\sum_{\nu=0}^{k} Z_{\nu}|b^{(h)}_{\nu,k}(y)|\right| \geq \tau\right] &\leq
\exp\left(-\frac{\tau}{U(y)\lambda}\right)\\
&\leq \exp\left(-\frac{\tau}{(2^{h}-1)\lambda}\right),
\end{align*}
for every $y\in[0,1]$, concluding the proof.

\section{Proof of Proposition~\ref{prop:affinelaplacemultivariate}}\label{sec:affinelaplacemultivariate}

The proof of the proposition follows from the same argument provided in Appendix~\ref{sec:affinelaplace}, with some minor changes. In particular, it suffices to provide a tail bound for
\[\max_{\vect{y}\in[0,1]^{\l}}\left|\sum_{j=1}^{\l}\sum_{\nu_j=0}^{k}Z_{\vect{\nu}}\left|\prod_{i=1}^{\l}{b^{(h)}_{\nu_i,k}(y_i)}\right|\right|,\]
since, as observed in Appendix~\ref{sec:affinelaplace}, the random variables $Z_{\vect{\nu}}$ are i.i.d. and symmetric around zero. In order to apply Theorem~\ref{thm:logconcave} and conclude the proof, we need to upper bound
\[U(\vect{y})=\sum_{j=1}^{\l}\sum_{\nu_j=0}^{k}\left|\prod_{i=1}^{\l}{b^{(h)}_{\nu_i,k}(y_i)}\right|,\]
for every $\vect{y}\in[0,1]^{\l}$. We have
\begin{align*}
    U(\vect{y}) &= \sum_{j=1}^{\l}\sum_{\nu_j=0}^{k}\prod_{i=1}^{\l}\left|{b^{(h)}_{\nu_i,k}(y_i)}\right|\\
    &=\left(\sum_{j=2}^{\l}\sum_{\nu_j=0}^{k}\prod_{i=2}^{\l}\left|{b^{(h)}_{\nu_i,k}(y_i)}\right|\right)\sum_{\nu_1=0}^{k}\left|{b^{(h)}_{\nu_1,k}(y_1)}\right|\\
    &\leq (2^{h} - 1)^{\l}\ ,
\end{align*}
since, according to Equation~\eqref{eq:boundonthesum}, \[\sum_{\nu_{j}=0}^{k}\left|{b^{(h)}_{\nu_{j},k}(y_{j})}\right|\leq (2^{h}-1)\]
for every $j\in\{1,\ldots,\l\}$. The rest of the proof follows from the same computations done at the end of Appendix~\ref{sec:affinelaplace}.

\section{Approximation Error of Multivariate Bernstein Polynomials}\label{sec:induction}

In what follows, we assume that $f\colon[0,1]^{\l}\ra\reals$ is a $(\gamma,L)$-H\"older continuous function. The proof for $(h,T)$-smooth functions follows the same argument, with minor changes. The argument we present here is by induction on $\l$. The base case ($\l=1$) follows from the fact that the Bernstein polynomial $B_{k}(f;y_1)$ converge uniformly to $f$ in the interval $[0,1]$, as shown in Theorem~\ref{thm:bernstainapproxhoelder}. Assume now
\[\left|B_{k}(f;\vect{y})-f(\vect{y})\right|\leq \l L\left(\frac{1}{4k}\right)^{\gamma/2},\]
for every $\vect{y}\in[0,1]^{\l}$. Let $f\colon[0,1]^{\l+1}\ra\reals$ be a $(\gamma,L)$-H\"older continuous function and let $B_{k}(f;\vect{y})$ be the corresponding Bernstein polynomial.
For every $\vect{y}=(y_{1},\ldots,y_{\l+1})\in[0,1]^{\l+1}$, let
\[G(f;\vect{y})=\sum_{j=1}^{\l}\sum_{\nu_j=0}^{k}f\left(\frac{\nu_1}{k},\cdots,\frac{\nu_{\l}}{k},y_{\l+1}\right)\prod_{i=1}^{\l}{b^{(h)}_{\nu_i,k}(y_i)}.\]
The error $\left|B_{k}(f;\vect{y})-f(\vect{y})\right|$ can then be bounded by
\begin{align}
    &\leq\left|B_{k}(f;\vect{y})-G(f;\vect{y})\right|+ \left|G(f;\vect{y})-f(\vect{y})\right|\label{eq:inductbernstein}\\
    &\leq L\left(\frac{1}{4k}\right)^{\gamma/2} + \l L\left(\frac{1}{4k}\right)^{\gamma/2}\nonumber\\
    &= (\l+1)L\left(\frac{1}{4k}\right)^{\gamma/2}.\nonumber
\end{align}

In fact, the second term of Equation~\eqref{eq:inductbernstein} is the error of the Bernstein polynomial of $f$ seen as a function of $y_1,\ldots,y_{\l}$ only. The corresponding bound then follows from the inductive step.
On the other hand, the first summand corresponds to the approximation error of the (univariate) Bernstein polynomial of $G(f,\vect{y})$ as a function of the remaining variable $y_{\l+1}$.
The statement for $(h,T)$-smooth functions is similarly obtained by replacing $B_{k}$ with $B^{(h)}_{k}$ and using the bound of Theorem~\ref{thm:bernstainapproxsmooth}.

\section{Naive Bayes Classification}\label{sec:naivebayes}

In this section, we show how to bound the sensitivity of a naive Bayes classifier $F_{\cD}$, as defined in Section~\ref{sec:examples}.
\begin{align*}
    S(F) &= \sup_{\vect{y}\in[0,1]^{\l}, \cD\sim\cD'}|F(\cD,\vect{y})-F(\cD',\vect{y})|\\
    &\leq 2\sup_{l\in\{l^{+},l^{-}\}, \vect{y}\in[0,1]^{\l}, \cD\sim\cD'}|\Pr(\vect{y}|l,\cD)\Pr(l|\cD)\\
    \ &\mathrel{\phantom{\leq\ }}   
    \mathrel -\Pr(\vect{y}|l,\cD')\Pr(l|\cD')|\ .
\end{align*}

We assume that a class probability $\Pr(l|\cD)$ is estimated using the corresponding relative frequency in the training set~$\cD$. Therefore, for $\cD\sim\cD'$, $\Pr(l|\cD)\leq\Pr(l|\cD')+1/n$. Assume now that for every $\vect{y}\in[0,1]^{\l}$, $\cD\sim\cD'\in\cX^{n}$, $l\in\{l^{+},l^{-}\}$ and $i\in\{1,\ldots,\l\}$ there exists $0\leq\xi<1$ such that
\[\left|\Pr(y_i|l,\cD)-\Pr(y_i|l,\cD')\right|\leq\xi\]
holds. We then have
\begin{align*}
    \Pr(\vect{y}|l,\cD) &\propto \prod_{i=1}^{\l}\Pr(y_i|l,\cD)\\
    &\leq \prod_{i=1}^{\l}\left(\Pr(y_i|l,\cD')+\xi\right)\\
    &\leq \prod_{i=1}^{\l}\Pr(y_i|l,\cD') + (2^{\l}-1)\xi\ ,
\end{align*}
where the last inequality follows from the fact that there are $2^{\l}-1$ cross products and each one of them has at least a $\xi$ factor. If each (unidimensional) likelihood is estimated using KDE~\cite{JL1995} with a Gaussian kernel of bandwidth $b$, $\xi$ corresponds to the upper bound on the sensitivity of KDE shown in Section~\ref{sec:examples}. Putting all the pieces together, we obtain
\begin{align*}
    S(F) &\leq 2\sup_{l\in\{l^{+},l^{-}\},\vect{y}\in[0,1]^{\l},\cD\sim\cD'} \Pr(l|\cD')\frac{2^{\l}-1}{n\sqrt{2\pi}b}\\
    \ &\mathrel{\phantom{\leq\ }}   
    \mathrel + \frac{1}{n}\Pr(\vect{y}|l,\cD)\\
    &\leq 2\left(\frac{2^{\l}-1}{n\sqrt{2\pi}b} + \frac{1}{n}\right).
\end{align*}

\section{Logistic Regression}\label{sec:logisticregression}

Let $X=\{\vect{x}\in[0,1]^{\l}\colon \|\vect{x}\|_2\leq 1\}$. Let furthermore $\cX=X\times[0,1]$ and $\cY=[0,1]^{\l}$. The logistic regressor can be seen as a function $F\colon \cX^{n}\times\cY\ra[0,1]$ such that, for $\cD=((\vect{d}_1,l_1),(\vect{d}_2,l_2),\ldots,(\vect{d}_n,l_n))\in\cX^{n}$, $F_{\cD}(\vect{y}) = 1/(1+\exp(-\langle \vect{w}^{\str},\vect{y}\rangle))$, where $\vect{w}^{\str}$ is such that
\[\vect{w}^{\str}\in\argmin_{\vect{w}\in\reals^{\l}} \frac{C}{n}\sum_{i=1}^{n}\log\left(1+e^{-l_i\langle \vect{w},\vect{d}_i\rangle}\right) + \frac{1}{2}\|\vect{w}\|_2^2\ .\]

In order to compute $S(F)$, we first observe that the sigmoid function is $1/4$-Lipschitz. Denoting by $\vect{w}\sim \vect{w}'$ the minimizers obtained from input databases $\cD\sim\cD'$, we have
\begin{align*}
    S(F)&\leq \sup_{\vect{y}\in \cY,\vect{w}\sim \vect{w}'} \frac{1}{4}|\langle \vect{w}-\vect{w}',\vect{y}\rangle|\\
    &\leq \sup_{\vect{y}\in \cY,\vect{w}\sim \vect{w}'} \frac{1}{4}\|\vect{w}-\vect{w}'\|_2 \|\vect{y}\|_2\ ,
\end{align*}
where the last inequality follows from an application of the Cauchy-Schwarz inequality.~\citenoun{CM2008} showed $\sup_{\vect{w}\sim \vect{w}'}\|\vect{w}-\vect{w}'\|_2\leq 2C/n$. Since $\|\vect{y}\|_2\leq \sqrt{\l}$ for every $\vect{y}\in\cY$, we have $S(F)\leq C\sqrt{\l}/(2n)$. Since $F_{\cD}$ is an $(h,T)$-smooth function for any positive integer $h$, with probability at least $1-\beta$ the error introduced by the mechanism is bounded by
\[O\left(\frac{C}{n\eps}\log(1/\beta)\right)^{\frac{h}{\l + h}}.\]
We note that defining $F_{\cD}(\vect{y})=\langle \vect{w}^{\str},\vect{y}\rangle$ the previous bound can be improved to
\[O\left(\frac{C}{n\eps}\log(1/\beta)\right),\]
since $S(F)\leq 2C\sqrt{\l}/n$ and $F_{\cD}(\vect{y})$ is a linear function. The prediction with the sigmoid function achieves the same error bound, being $1/4$-Lipschitz.

\end{document}